\newtheorem{theorem}{Theorem}
\newtheorem{corollary}[theorem]{Corollary}
\newtheorem{lemma}[theorem]{Lemma}
\newtheorem{proposition}[theorem]{Proposition}
\newtheorem{conjecture}[theorem]{Conjecture}
\newcommand{\sproof}{\noindent{\bf Proof.}\hspace*{1em}}
\def\eps{{\varepsilon}}
\def\literalqed{{\ \nolinebreak\hfill\mbox{\quad}}}
\long\def\symbolfootnote[#1]#2{\begingroup%
\def\thefootnote{\fnsymbol{footnote}}\footnote[#1]{#2}\endgroup} 
    \renewcommand\part{%
      \if@openright
        \cleardoublepage
      \else
        \clearpage
      \fi
      \thispagestyle{empty}%
      \if@twocolumn
        \onecolumn
        \@tempswatrue
      \else
        \@tempswafalse
      \fi
      \null\vfil
      \secdef\@part\@spart}
\def\calX{{\cal X}}
\def\loss{\ell}
\def\eps{\epsilon}
\DeclareMathOperator{\argmin}{\text{argmin}}
\DeclareMathOperator{\argmax}{\text{argmax}}
\newcommand{\labeq}[2]{\begin{equation}\label{eq:#1}#2\end{equation}}
\newcommand{\ind}[1]{\llbracket #1 \rrbracket}
\newcommand{\floor}[1]{\left\lfloor #1 \right\rfloor}
\def\({\left(}
\def\){\right)}
\renewcommand{\ind}[1]{\left\llbracket #1 \right\rrbracket}
\newcommand{\xqed}{\mbox{\raggedright $\Diamond$}}
\def\xqedhere{\hfill\xqed}
\def\newpar{\vspace{-3mm}\paragraph}
\renewcommand{\vec}{\mathbf}
\def\cal{\mathcal}
\def\g{\mathbf{g}}
\def\md{\mathbf{md}}
\def\mn{\mathbf{mn}}
\def\mj{\mathbf{mj}}
\def\loss{\mathcal{L}}
\definecolor{darkgreen}{rgb}{0,0.6,0}
\newcommand{\kibitz}[2]{\ifnum\Comments=1{\color{#1}{#2}}\fi}
\newcommand{\rmr}[1]{\kibitz{blue}{[RESHEF:#1]}}
\def\shortcite{\cite}
\newcommand{\cut}[1]{#1}
\def\var{\mathbb{V}}
\def\newpar{\vspace{-2mm}\paragraph}
\def\newsubsec{\vspace{-1mm}\subsection}
\begin{document}

\title{Proxy Voting for Better Outcomes}
\author{Gal Cohensius, Shie Manor, Reshef Meir, Eli Meirom and Ariel Orda\\
	Technion---Israel Institute of Technology}

\if 0  \rmr{add for camera ready. use technion mail addresses}
\numberofauthors{5}
\author{
Gal Cohensius \\
       \affaddr{Technion---Israel Institute of Technology}\\
       \affaddr{Haifa, Israel}\\
       \email{galcohensius@technion.ac.il}
\alignauthor
Shie Manor \\
       \affaddr{Technion---Israel Institute of Technology}\\
       \affaddr{Hafia, Israel}\\
       \email{shie@ee.technion.ac.il}
\alignauthor  
Reshef Meir \\ 
       \affaddr{Technion---Israel Institute of Technology}\\
       \affaddr{Hafia, Israel}\\
       \email{reshef24@gmail.com}
 \and  
\alignauthor Eli Meirom\\
       \affaddr{Technion---Israel Institute of Technology}\\
       \affaddr{Hafia, Israel}\\
       \email{eli.meirom@gmail.com}
\alignauthor Ariel Orda\\
       \affaddr{Technion---Israel Institute of Technology}\\
       \affaddr{Hafia, Israel}\\
       \email{ariel@ee.technion.ac.il}
}







\fi
\maketitle

\begin{abstract}
	We consider a social choice problem where only a small number of people out of a large population are sufficiently available or motivated to vote. A common solution to increase participation is to allow voters use a proxy, that is, transfer their voting rights to another voter. Considering social choice problems on metric spaces, we compare voting with and without the use of proxies to see which mechanism better approximates the optimal outcome, and characterize the regimes in which proxy voting is beneficial. 
	
	When voters' opinions are located on an interval, both the median mechanism and the mean mechanism are substantially improved by proxy voting. When voters vote on many binary issues, proxy voting is better when the sample of active voters is too small to provide a good outcome.  Our theoretical results extend to situations where available voters choose strategically whether to participate. We support our theoretical findings with empirical results showing substantial benefits of proxy voting on simulated and real preference data. 
\end{abstract}







\section{Introduction}
\label{sec:intro}
In his 1969 paper, James Miller envisioned a world where technology enables people to vote from their homes~\cite{miller1969program}. With the rise of participatory democracies, the formation of many overlapping online communities, and the increasing use of polls by companies and service providers, this vision is turning into  reality. 

New online voting apps provide an easy way for people to report and aggregate their preferences, from simple direct polls (such as those used by Facebook and Doodle), through encrypted large-scale applications (e.g. \url{electionbuddy.com}), to  sophisticated tools that use AI to guide group selection, such as \url{robovote.org}. 
As a result, each of us is prompted to vote in various formats multiple times a day: we vote for our union members and approve their decisions, on meeting times, and even on the temperature in our office.\footnote{\url{http://design-milk.com/comfy-app-}  \url{brings-group-voting-workplace-thermostat}.}

Is direct democracy coming back? Can it replace representative democracy and parliaments? As it turns out, many online voting instances and polls have low participation rates~\cite{christian2005voting,jonsson2011user}, presumably since most people consider them  insignificant, low-priority, or simply a burden.  
The actual decisions in many of these polls are often taken by a small group of dedicated and active voters, with little or no involvement from most people who could have voted. The outcome in such cases may be completely unrepresentative for the entire population, e.g. if the motivation of the active voters depends on their position or other factors. Even if the set of active voters is selected at random and is thus representative in expectation, there may be too few voters for a reliable outcome. For example, Mueller et al.~\shortcite{mueller1972representative} argue that to function well, such a ``random democracy'' would require over 1000 representatives.

\medskip

\emph{Proxy voting} lets voters who are unable or uninterested to vote themselves transfer their voting rights to another person---a proxy. Proxy voting is common in politics and in corporates~\cite{riddick1985riddick}, and plays an important role in existing and planned systems for e-voting and participatory democracies~\cite{petrik2009participation}. Yet there is only a handful of theoretical models dealing with proxy voting, and our understanding of its effects are limited (see Discussion). 

In this paper, we model voters' positions as points in a metric space aggregated by some function $\g$ (specifically,  Median, Mean, or Majority).  For example, a voter's position may be her preferred pension policy in the union's negotiation with management (say, how much to save on a scale of 0 to 10). The optimal policy is an aggregate over the preferences of all employees. Since actively participating in union's meeting costs time and effort, we consider a subset of \emph{active voters} selected from the population (either at random or by strategic self-selection), and ask whether the accuracy can be improved by allowing inactive voters to use a proxy at no cost. Following Tullock~\shortcite{tullock1967proportional}, we weigh the few active voters (who are used as proxies) according to their number of followers, and assume that inactive voters select the ``nearest'' active voter as a proxy. For example, a person who is unable to attend the next union meeting could use an online app to select a colleague with similar preferences as her proxy, thereby increasing his weight and influencing the outcome in her direction.

The intuition for why proxy voting should increase accuracy is straight-forward: opinions that are more ``central'' or ``representative'' would attract followers and gain weight, whereas the weight of ``outliers'' that distort the outcome will be demoted. However as we will see, this reasoning does not always work in practice.
Thus it is important to understand the conditions in which proxy voting is expected to improve accuracy, especially when voters behave strategically.

%
%

\newsubsec{Contribution and Structure}
We dedicate one section to each common mechanism, and show via  theorems and empirical results that proxy voting usually has a significant positive effect on accuracy, and hence welfare.  For the Median mechanism on a line (Section~\ref{sec:median}), proxy voting may only increase the accuracy, often substantially. For the Mean mechanism on a line (Section~\ref{sec:mean}), we show improvement in expectation if active voters are sampled from the population at random. The last domain contains multiple independent binary issues, where a Majority vote is applied to each issue (Section~\ref{sec:binary}). Here we show that proxy voting essentially leads to a ``dictatorship of the best expert,'' which increases accuracy when the sample is small and/or when voters have high disagreements. Interestingly, results on real preference data are even more positive, and we analyze the reasons in the text.
We further characterize equilibria outcomes when voters strategically choose whether to become active (i.e., use as proxies), and show that most of our results extend this strategic setting.  Results are summarized in Table~\ref{tab:results}.

\section{Preliminaries}

%
%

$\cal X$  is the \emph{space}, or set of possible voter's preferences, or types. In this paper  $\cal X\subseteq \mathbb R^k$ for some $k\geq 1$ dimensions, thus each type can be thought of as a position in space. We use the $\ell_\rho$ distance metric on $\cal X$. 
In particular, we will consider two spaces: an unknown interval $\cal X=[a,b]$ for some $a,b\in {\mathbb R}  \cup \{\pm \infty\}$, and multiple binary issues $\calX = \{0,1\}^k$. Note that this means that all $\ell_\rho$ norms coincide (not true e.g. for $\calX=\mathbb R^2$).

We assume an infinite population of voters, that is given by a distribution $f$ over $\calX$. We say that $f$ over the interval $[a,b]$ is \emph{symmetric} if there is a point $c$ s.t. $f(c-x)=f(c+x)$ for all $x$.  We say that $f$ over the interval $[a,b]$ is \emph{[weakly] single-peaked}  if there is a point $c\in \calX$ s.t. $f$ is [weakly] increasing in $[\min a,z]$ and [weakly] decreasing in $[z,b]$. $f$ is \emph{single-dipped} if the function $-f$ is single-peaked. For example, (truncated) Normal distributions are single-peaked, and Uniform distributions are weakly single peaked. We denote the cumulative distribution function corresponding to $f$ by $F(X)= Pr_{z\sim f}(z<x)$.

\newpar{Mechanisms}
 A \emph{mechanism} $\g:\cal X^n \rightarrow \cal X$ (also called a voting rule) is a function that maps any profile (set of positions) to a winning position.

Two particular mechanisms we will consider for the interval setting are the \emph{Mean mechanism}, $\mn(S) = \frac{1}{|S|}\sum_{s_i\in S}s_i$, and the \emph{Median mechanism}, $\md(S) = \min\{s_i\in S \text{ s.t. } |\{j:s_j\leq s_i\}| \geq |\{j:s_j> s_i\}|$ (see Fig.~\ref{fig:line}). %

For the binary issues we will focus on a simple \emph{Majority mechanism} that aggregates each issue independently according to the majority of votes. That is, $(\mj(S))^{(j)} = 1$ if $|\{i:s_i^{(j)}=1\}| > |\{i:s_i^{(j)}=0\}|$ and $0$ otherwise, where $s^{(j)}$ is the $j$'th entry of position vector $s$. In all mechanisms we break ties lexicographically towards the lower outcome. 

All of our three mechanisms naturally extend to such infinite populations, as the Median, Mean, and Majority of $f$ (in their respective domains) are well defined. The mechanisms also extend to weighted finite populations.  E.g. for $n$ agents  with positions $S$ and weights $\vec w=\{w_1,\ldots,w_n\}$, the weighted mean is defined as $\mn(S,\vec w) \equiv \frac{1}{\sum_{i
\leq n} w_j}\sum_{i \leq n}w_i s_i$, and similarly for the Median and Majority.  
 
\medskip
In our model, a finite subset $N$ of $n$ \emph{agents} are selected out of the whole population,  and only these agents can vote. We follow \cite{mueller1972representative} in assuming that positions $S_N=\{s_1,\ldots,s_n\}$ are sampled i.i.d. from $f$. We can think of these as voters who happen to be available at the time of voting, or voters for which this voting is important enough to consider participation.

In our basic setup, the unavailable voters abstain, while \emph{all agents vote}. The result is $\g(S_N)$. Yet two problems may prevent us from getting a good outcome. First, $N$ may be too small for $\g(S_{N})$, the decision made by the agents, to be a good estimation of $\g(f)$, the true preference of the population. Second, even selected agents may decide not to vote due to various reasons, and such strategic participation may bias the outcome. We will then have a set of active agents $M \subseteq N$, and the outcome $\g(S_M)$ may be very far from  both $\g(S_N)$ and $\g(f)$, depending on the equilibrium outcome of the induce game (later described in more detail).

%

\newpar{Proxies and weights}
Our main focus in this paper is characterizing the regime in which \emph{voting by proxy} is beneficial. In this setup each inactive voter specifies one of the active agents as a proxy to vote on her behalf. Given a set $M$ of active agents, the decisions of inactive voters are specified by a mapping $J_M:\calX \rightarrow M$, where $J_M(x)\in M$ is the proxy of any  voter located at $x\in \calX$. We label the Proxy setup as $P$, in contrast to the Basic setup denoted as $B$. We highlight that all voters select a proxy, whether they are part of $N$ or not.

Without further constraints, we will assume that the proxy of a voter at $x$ is always its nearest active agent, i.e. the agent whose position (or preferences) are most similar to $x$. Thus for every set $M$, we get a partition (a Voronoi tessellation) of $\calX$ and can compute the \emph{weight} of each active agent $j$ by integrating $f$ over the corresponding cell. Formally, $J_M(x) = \argmin_{j\in M}{\|x-s_j\|}$ and $w_j = \int_{x\in \cal X: J_M(x)= j}f(x)dx$.  
  The outcome of each mechanism $\g$ for agents $N$ is then defined as $\g^B(S_N) = \g(S_N)$ in the Basic scenario,  and $\g^P(S_N) = \g(S_N,\vec w_N)$ in the Proxy scenario, where $\vec w_N$ is computed according to $f$ as above (see Fig.~\ref{fig:line}). The distribution $f$ should be inferred from the context.

\begin{figure}[t]
%
%
%
%
%

\begin{center}

\begin{tikzpicture}[scale=0.8,transform shape]
\def\hh{1.8}
\def\rr{0.08}
\draw[thick,|-|] (0,\hh) -- (10,\hh);
\node [above] at (0,\hh+0.1) {$0$};
\node [above] at (10,\hh+0.1) {$10$};

\draw[fill] (1,\hh) circle [radius=\rr];
\draw[fill] (3,\hh) circle [radius=\rr];
\draw[fill] (6,\hh) circle [radius=\rr];
\draw[fill] (7,\hh) circle [radius=\rr];
\node [below] at (1,\hh) {$s_1=1$};
\node [below] at (3,\hh) {$s_2=3$};
\node [below] at (5.8,\hh) {$s_3=6$};
\node [below] at (7.5,\hh) {$s_4=7$};

\draw[->] (2,\hh+0.3) -- (3-0.2,\hh+0.1); 
\node [above] at (1.5,\hh+0.1) {\small{$\md(S_N)$}};
\draw[->] (17/4-0.5,\hh+0.3) -- (17/4,\hh+0.03);
\node [above] at (17/4-0.5,\hh+0.3) {\small{$\mn(S_N)=4.25$}};


\draw[thick,|-|] (0,0) -- (10,0);
\draw[fill] (1,0) circle [radius=\rr];
\draw[fill] (3,0) circle [radius=\rr];
\draw[fill] (6,0) circle [radius=\rr];
\draw[fill] (7,0) circle [radius=\rr];
\draw[|<->|] (0,-0.5) -- (2,-0.5);
\node [below] at (1,-0) {$w_1=2$};
\draw[|<->|] (2.0,-0.2) -- (4.5,-0.2);
\node [below] at (3,-0.2) {$w_2=2.5$};
\draw[|<->|] (4.5,-0.5) -- (6.5,-0.5);
\node [below] at (5.8,-0) {$w_3=2$};
\draw[|<->|] (6.5,-0.2) -- (10,-0.2);
\node [below] at (7.5,-0.2) {$w_4=3.5$};

\draw[->] (6.5,0.3) -- (6+0.1,0.1); 
\node [above] at (6.5,0.3) {\small{$\md(S_N,\vec w_N)$}};
\draw[->] (4.6-1,0.3) -- (4.6,0.03);
\node [above] at (4.6-1,0.3) {\small{$\mn(S_N,\vec w_N)=4.6$}};


\draw[thick,|-|] (0,0-\hh) -- (10,0-\hh);
\draw[fill] (1,0-\hh) circle [radius=\rr];
\draw[fill=white] (3,0-\hh) circle [radius=\rr];
\draw[fill] (6,0-\hh) circle [radius=\rr];
\draw[fill] (7,0-\hh) circle [radius=\rr];
\draw[|<->|] (0,-0.2-\hh) -- (3.5,-0.2-\hh);
\node [below] at (1,-0.2-\hh) {$w_1=3.5$};
\draw[|<->|] (3.5,-0.5-\hh) -- (6.5,-0.5-\hh);
\node [below] at (5.8,-\hh) {$w_3=3$};
\draw[|<->|] (6.5,-0.2-\hh) -- (10,-0.2-\hh);
\node [below] at (7.5,-0.2-\hh) {$w_4=3.5$};

\draw[->] (6.5,0.3-\hh) -- (6+0.1,0.1-\hh); 
\node [above] at (6.5,0.3-\hh) {\small{$\md(S_M,\vec w_M)$}};
\draw[->] (4.6-1,0.3-\hh) -- (4.6,0.03-\hh);
\node [above] at (4.6-1,0.3-\hh) {\small{$\mn(S_M,\vec w_M)=4.6$}};


\draw[dotted] (5,\hh+0.2) -- (5,-0.2-\hh);

\end{tikzpicture}
\end{center}
%
	\protect\caption{\label{fig:line}
		The top figure shows the preferences of 4 agents on an interval, as well as the outcomes of the median and mean mechanisms. In the middle figure we see the weight of each agent under proxy selection, assuming $f$ is a uniform distribution on the whole interval, as well as the modified outcomes. The bottom figure shows the outcome under proxy voting if agent~2 becomes inactive, and $M=\{1,3,4\}$. The dotted line marks $\mn(f)=\md(f)=5$. \vspace{-5mm}}
\end{figure}
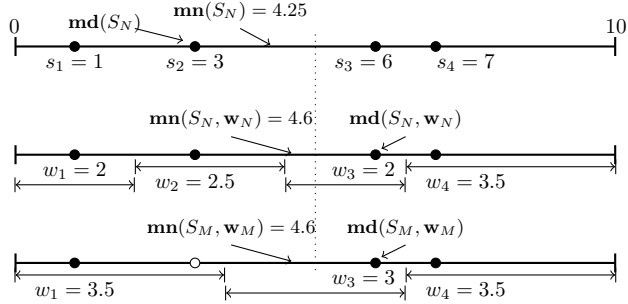 

\newpar{Equilibrium under strategic participation}
In our strategic scenarios the agents $N$ are players in a complete information game, whose (ordinal) utility exactly matches their preferences as voters. I.e., they prefer an outcome that is as close as possible to their own position. Each agent has two actions: active and inactive. In addition, a voter who is otherwise indifferent between the two possible outcomes (i.e. he is not pivotal) will prefer to remain inactive, a behavior known as \emph{lazy-bias}~\cite{elkind2015equilibria}. We refer to these strategic/lazy-bias scenarios by adding ${+}L$ to either $B$ or $P$. Agents may not misreport their position.


When there are no proxies (scenario $B{+}L$) this strategic decision is very simple, since each agent has a single vote which may or may not be pivotal (and when it is pivotal it always helps the agent). On the other hand, if voting by proxy is allowed (scenario $P{+}L$),  any change in the set of active agents changes the proxy selection and thus the weights of all remaining agents. 
Recall that $\vec w_{\vec M}$ denotes the weights we get under proxy selection with active set $M$. 
Then for all $i\notin M$, agent~$i$ prefers to join set $M$ iff $\left\|\vec g(S_{M \cup \{i\}},\vec w_{M \cup \{i\}}) - s_i \right\| < \left\|\vec g(S_M,\vec w_{M}) - s_i \right\|$. 

For example, if agent~2 in Fig.~\ref{fig:line} (bottom) becomes inactive, we get no change in the Median outcome $\md(S_{M},\vec w_M)$, and thus agent~2 prefers to become inactive (it is also possible that an agent strictly loses when becoming active).

A pure Nash equilibrium, or equilibrium for short, is a subset $M\subseteq N$ s.t. no agent in $M$ prefers to be inactive, and no agent in $N\setminus M$ prefers to be active. While it is possible that there are multiple equilibria (or none at all), this will turn out not to be a problem in most cases we consider. \rmr{ok for median and majority, what about mean?}
We thus define $\g^{B{+}L}(S_N) = \g(S_M)$ and $\g^{P{+}L}(S_N) = \g(S_M,\vec w_M)$, where $M\subseteq N$ is the set of active agents in equilibrium.

\medskip
 To recap, an \emph{instance} is defined by a population distribution $f$,  a scenario $Q\in\{\! B,P,B+L,P+L\!\}$, a mechanism $\g\!\in\!\{\!\md,\mn,\mj\!\}$ and a sample size $n$.
We sample a finite profile of $n$  agents i.i.d. from $f$, whose locations are $S_N$. Then, according to the scenario,  either all of $N$ are active, or we get a subset $M$ of active agents. The votes of all active agents are aggregated according to $\g$, with or without being weighted by $\vec w_{\vec M}$, the number of their inactive followers. Finally, the outcome of mechanism $\g^Q(S_N)$ depends on a subset of these parameters, according to the scenario $Q$.
%
\newpar{Evaluation}
We do not consider here the reasons for using one mechanism over another, and simply assume that $\g(f)$ reflects the best possible outcome to the society or to the designer. 
We want to measure how close is $\g^Q(S_N)$ to the \emph{optimal outcome} $\g(f)$. We define the \emph{error} as the  distance between $\g^Q(S_N)$ and $\g(f)$, i.e.,   $\|\g^Q(S_N)-\g(f)\|$. \rmr{It actually does not matter for these 3 domains which $\ell_p$ norm is used}

The \emph{loss} of a mechanism $\g$ is calculated according to its expected error---the expected squared distance from the optimum---over all samples of $m$ available voters.
\labeq{loss}
{\loss^Q(n) =\mathbb{E}_{S_N\sim f^n}\left[ \|\g^Q(S_N)-\g(f)\|^2\right],}
where the mechanism $\g$ and the distribution $f$ can be inferred from the context, and the expectation is over all subsets of $n$ positions sampled i.i.d. from distribution $f$ (sometimes omitted from the subscript). 

 We note that the loss is the sum of two components~\cite{wackerly2007mathematical}: the (squared) bias $\mathbb{E}[\g^Q(S_N) - \g(f)]^2$ and the variance $\var[\g^Q(S_N)]$.
A mechanism $\g$ is \emph{unbiased} for $(Q,f)$ if $\mathbb{E}[\g^Q(S_N)] = \g(f).$
For example in the Basic scenario, mechanisms $\mn$ and $\mj$ are unbiased for $(B,f)$ regardless of $f$, and $\md$ is unbiased for $(B,f)$ if $f$ is symmetric, but not for other (skewed) distributions.

%
%
%

Our primary goal is to characterize the conditions under which proxy voting improves the outcome, i.e.   $\loss^{P[+L]}(n) < \loss^{B[+L]}(n)$. 

\section{Median Voting on an Interval}
\label{sec:median}
The Median mechanism is popular for two primary reasons. First, it finds the point that minimizes the sum of distances to all reported positions, i.e. $\md(S) \in \argmin_{x\in \calX}\sum_{s_i\in S}|\! s_i-x\! |$. Second, in strategic settings where agents might misreport their positions, it is known that the Median mechanism is group strategyproof~\cite{moulin1980strategy}, meaning that no subset of agents can gain by misreporting.
%
%
%

\newsubsec{Random participation}
Suppose all $n$ agents sampled from $f$ are active. 
Let $j^*\in N$ be the proxy closest to $x^*=\md(f)$, and $s^*=s_{j^*}$.

\begin{lemma}\label{lemma:median_optimal} 		$\md^{P}(S_N)= s^*$ for any distribution $f$. 
\end{lemma}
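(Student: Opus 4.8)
The plan is to show that under proxy voting with all $n$ agents active, the weighted median necessarily falls on $s^* = s_{j^*}$, the active agent nearest to the true median $x^* = \md(f)$. The key observation is that the weight $w_j$ of each active agent $j$ is obtained by integrating $f$ over its Voronoi cell, and the Voronoi cells are intervals that partition $\calX = [a,b]$ in the same left-to-right order as the agents' positions. So if we order the active agents $s_{(1)} < s_{(2)} < \cdots < s_{(n)}$, agent $s_{(k)}$ owns an interval $[\ell_k, r_k]$ where $\ell_{k+1} = r_k$ is the midpoint between $s_{(k)}$ and $s_{(k+1)}$, and the cumulative weight $\sum_{i \le k} w_{(i)} = F(r_k)$.

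First I would establish that $s^*$ is the agent whose Voronoi cell contains $x^*$: since $j^*$ is by definition $\argmin_{j \in N} \|x^* - s_j\|$, the point $x^*$ lies in $j^*$'s cell, i.e. $\ell_{k^*} \le x^* \le r_{k^*}$ where $s^* = s_{(k^*)}$. Next I would translate the weighted-median condition: the weighted median is the position $s_{(k)}$ with smallest $k$ such that the cumulative weight up to and including it is at least half the total, i.e. $F(r_k) \ge 1/2$ (using total weight $1$ since $f$ integrates to $1$). Combining these: because $x^* = \md(f)$ satisfies $F(x^*) \le 1/2 \le F(x^{*-})$ appropriately (with the lexicographic tie-break toward the lower outcome matching the definition of $\md(f)$ as the leftmost such point), and because $x^*$ lies in cell $k^*$ so $F(r_{k^*}) \ge F(x^*)$ on one side and $F(\ell_{k^*}) = F(r_{k^*-1}) \le F(x^*) \le 1/2$... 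I need to verify that $k^*$ is exactly the index selected by the weighted median, meaning $F(r_{k^*}) \ge 1/2$ and $F(r_{k^*-1}) < 1/2$ (or $\le$ with the tie-break). The first holds because $r_{k^*} \ge x^*$ implies $F(r_{k^*}) \ge F(x^*)$, and $F(x^*) \ge 1/2$ by definition of median (from the right/limit side); the second holds because $r_{k^*-1} = \ell_{k^*} \le x^*$ and monotonicity of $F$ gives $F(\ell_{k^*}) \le F(x^*) \le 1/2$, so any earlier agent has insufficient cumulative weight.

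The main obstacle I anticipate is the careful bookkeeping of ties and half-open versus closed intervals: the median of a continuous $f$ is unambiguous, but $f$ may have atoms or flat stretches, and the lexicographic tie-breaking rule ("towards the lower outcome") must be shown to be consistent on both the population side ($\md(f) = x^*$) and the weighted-sample side ($\md(S_N, \vec w_N) = s^*$). In particular, if the cumulative weight hits exactly $1/2$ at some boundary $r_k$ with $k < k^*$, I must check that the definition still picks the leftmost agent with cumulative weight $\ge 1/2$ consistently with where $x^*$ sits; this requires that $x^*$ being the \emph{leftmost} point with $F \ge 1/2$ forces $F(r_k) < 1/2$ strictly for $k < k^*$ whenever $r_k < x^*$. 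Once the boundary/tie cases are handled cleanly, the result follows immediately from monotonicity of $F$ and the structure of the Voronoi partition; no real computation is needed.
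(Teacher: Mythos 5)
Your proposal is correct and follows essentially the same route as the paper: both rest on the observation that the Voronoi cells are ordered intervals, that $x^*=\md(f)$ lies in the cell of $j^*$, and hence that the mass on each side of $x^*$ (each at least $1/2$) accrues to $\{1,\dots,j^*\}$ and $\{j^*,\dots,n\}$ respectively. The paper phrases this as a two-sided sandwich $\md(S_N,\vec w)\ge s^*$ and $\md(S_N,\vec w)\le s^*$ rather than pinning down the first index with cumulative weight $\ge 1/2$, which lets it skip most of the tie-breaking bookkeeping you flag (and which the paper indeed glosses over).
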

\begin{proof}
	Recall that $\md^{P}(S_N)=\md(S_N,\vec w)$ where $w_j$ is the weight of voters using $j\in N$ as a proxy. 
	All voters $x\geq \md(f)$ are mapped to one of the proxies $j^*,j^*+1,\ldots,n$, thus $\sum_{j=j^*}^n \geq 1/2$ and $\md(S_M,\vec w)\geq s^*$. Similarly, all voters $x\leq \md(f)$ are mapped to one of the proxies $1,2,\ldots,j^*$, thus $\sum_{j=1}^{j^*} \geq 1/2$ and $\md(S_N,\vec w)\leq s^*$. Thus $\md(S_N,\vec w)=s^*$. 
\end{proof}
Thus $\md^{P}(S_N)$ always returns the proxy closest to $x^*=\md(f)$, whereas $\md^{B}(S_N)$ returns some $j\in N$, meaning that the error is \emph{never higher} with proxy voting. I.e., $|\md^P(S)-x^*| \leq |\md^B(S)-x^*|$ for any $S$. In particular, the loss (=expected error) is weakly better.
\begin{corollary}For the Median mechanism, $\loss^{P}(n) \leq \loss^{B}(n)$ for any distribution $f$ and sample size $n$.
\end{corollary}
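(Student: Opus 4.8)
The plan is to reduce the corollary to a pointwise (per-realization) inequality and then integrate. By Lemma~\ref{lemma:median_optimal}, for every sampled profile $S_N$ we have $\md^{P}(S_N)=s^*$, and the first thing I would make explicit is that $s^*$ is exactly the element of $S_N$ closest to $x^*=\md(f)$: by the nearest-proxy definition the proxy assigned to the point $x^*$ is $\argmin_{j\in N}\|x^*-s_j\|$, and since in the Basic/Proxy (non-strategic) setting every agent in $N$ is active, this minimiser ranges over all of $S_N$. Hence $\|\md^{P}(S_N)-x^*\| = \min_{j\in N}\|s_j-x^*\|$.

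Next I would observe that $\md^{B}(S_N)=\md(S_N)$ is, by definition of the (weighted-or-unweighted) median, one of the sampled positions, i.e. $\md^{B}(S_N)\in S_N$. Therefore $\|\md^{B}(S_N)-x^*\|\ \geq\ \min_{j\in N}\|s_j-x^*\|\ =\ \|\md^{P}(S_N)-x^*\|$. This is precisely the pointwise bound $|\md^{P}(S)-x^*|\leq|\md^{B}(S)-x^*|$ already noted in the text after the lemma; the corollary is then obtained by squaring both sides (legitimate since both sides are nonnegative and $t\mapsto t^2$ is nondecreasing on $[0,\infty)$) and taking expectation over $S_N\sim f^n$. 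Monotonicity of expectation gives $\loss^{P}(n)=\mathbb{E}\big[\|\md^{P}(S_N)-\md(f)\|^2\big]\leq \mathbb{E}\big[\|\md^{B}(S_N)-\md(f)\|^2\big]=\loss^{B}(n)$, for every $f$ and every $n$, which is the claim.

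There is essentially no obstacle: all the real work is in Lemma~\ref{lemma:median_optimal}. The only two points that deserve a sentence each are (i) identifying ``the proxy closest to $\md(f)$'' with ``the sampled point closest to $\md(f)$'' (true because all agents are active, so the Voronoi cell containing $x^*$ is owned by the nearest $s_j$), and (ii) noting $\md^{B}(S_N)\in S_N$ so it cannot be strictly closer to $x^*$ than that nearest point. Everything after that is ``square and integrate,'' so the proof is a couple of lines.
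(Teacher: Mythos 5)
Your proof is correct and follows exactly the paper's route: Lemma~\ref{lemma:median_optimal} gives the pointwise bound $|\md^P(S_N)-x^*|\leq|\md^B(S_N)-x^*|$ (since $\md^B(S_N)\in S_N$ while $\md^P(S_N)$ is the sampled point nearest $x^*$), and the corollary follows by squaring and taking expectations. You merely make explicit two small points the paper leaves implicit, so there is nothing further to add.
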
 
\begin{proof}
$$\loss^{P}(n)= E[(\md^{P}(S_N)-x^*)^2] \leq E[(\md^{B}(S_N)-x^*)^2] = \loss^{B}(n).$$
\rmr{ removed since for asymmetric distributions it does not equal the variance 
$$\loss^{P}(n) = \var_{S_N\sim f^n}[\md(S_N,\vec w)] = \var[s_k] $$ \\
$$ = E[(s_k-x^*)^2] \leq E[(\md(S_M)-x^*)^2]  =  \var[\md(S_M)] = \loss^{VA}(\md,n,f).$$
}
\end{proof}

Note that for symmetric distributions, both of $\md^{B}(S_N)=\md(S_N)$ and $\md^{P}(S_N)=\md(S_N,\vec w_N)$ are unbiased from symmetry arguments. Therefore to compute or bound the loss we just need to compute the variance of $\md^{Q}(S_N)$. 
For the unweighted median, this problem was solved by Laplace (see \cite{stigler1973studies} for details): 
Let $x^*=\md(f)$ be the median of symmetric distribution $f$ s.t. $f(x^*)>0$.\footnote{We will assume in this section that $f(x)>0$ in some environment of $x^*$, which is a very weak assumption.} The variance of $\md(S_N)$ is given by (approximately) $\frac{1}{4n f(x^*)^2}=\Theta(1/n)$. Since for any distribution the loss (or MSE) is lower-bounded by the variance, we get that for the Median mechanism, $\loss^B(n)=\Omega(\frac{1}{n})$.

We argue that the loss decreases \emph{quadratically faster} with the number of agents once proxy voting is allowed. 
\begin{conjecture} \label{conj:median}
For the Median mechanism, $\loss^P(n) = O(\frac{1}{n^2})$ for any distribution $f$. 
\end{conjecture}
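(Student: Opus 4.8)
The plan is to use Lemma~\ref{lemma:median_optimal} to collapse the loss into a single clean one-dimensional quantity, and then bound that quantity by an elementary nearest-neighbour tail estimate. By Lemma~\ref{lemma:median_optimal}, $\md^P(S_N)=s^*$, where $s^*$ is the position of the sampled agent nearest to $x^*=\md(f)$; in particular there is no bias/variance decomposition to manage, and
\[
\loss^P(n)=\mathbb{E}_{S_N\sim f^n}\big[(s^*-x^*)^2\big]=\mathbb{E}\big[D^2\big], \qquad D:=\min_{i\le n}\,|s_i-x^*|.
\]
Thus it suffices to show that the expected squared distance from the fixed point $x^*$ to the nearest of $n$ i.i.d.\ draws from $f$ is $O(1/n^2)$.

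First I would set up a tail bound on $D$ using only the section's standing assumption that $f>0$ near $x^*$. Choose $\delta_0>0$ and $c>0$ with $f\ge c$ on $[x^*-\delta_0,x^*+\delta_0]$, and put $\mu(\delta):=F(x^*+\delta)-F(x^*-\delta)$, which is non-decreasing in $\delta$. Independence of the samples gives $\Pr[D>\delta]=(1-\mu(\delta))^n$. For $0\le\delta\le\delta_0$ we have $\mu(\delta)\ge 2c\delta$ (or $\mu(\delta)\ge c\delta$ when $x^*$ is an endpoint of the interval), hence $\Pr[D>\delta]\le e^{-2cn\delta}$; for $\delta>\delta_0$, monotonicity of $\mu$ gives $\Pr[D>\delta]\le(1-\mu(\delta_0))^n=:\alpha^n$ with $\alpha<1$.

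Next I would integrate, writing $\mathbb{E}[D^2]=\int_0^\infty 2\delta\,\Pr[D>\delta]\,d\delta$ and splitting the integral at $\delta_0$. The near field contributes at most $\int_0^\infty 2\delta\,e^{-2cn\delta}\,d\delta=\tfrac{1}{2c^2 n^2}=O(1/n^2)$, which is exactly the target rate; reassuringly the constant scales like $f(x^*)^{-2}$, mirroring Laplace's $\Theta(1/n)$ formula for $\loss^B$. For the far field I would split the exponent as $(1-\mu(\delta))^n\le\alpha^{\,n-m}(1-\mu(\delta))^m$ for a fixed integer $m$ and bound $\int_{\delta_0}^\infty 2\delta\,(1-\mu(\delta))^m\,d\delta$ by the second moment of the minimum of $m$ i.i.d.\ copies of $|s_1-x^*|$; once $m$ is large enough this is a finite, $n$-independent constant, so the far field is $O(\alpha^n)=o(1/n^2)$.

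The main obstacle is controlling the far-field term when $f$ has unbounded support: the argument above yields a finite bound only if $|s_1-x^*|$ has a finite moment of sufficiently high order, and in fact for sufficiently heavy-tailed $f$ one can have $\loss^P(n)=\infty$ for every $n$. So the honest statement should carry a mild tail/moment hypothesis on $f$ (finite variance, say, or bounded support---both of which hold for every example in the paper) alongside $f(x^*)>0$, with an $O(\cdot)$ constant that necessarily depends on $f$. The remaining points are routine: the endpoint case $x^*\in\{a,b\}$ (one-sided neighbourhood, with $c\delta$ in place of $2c\delta$) and the measure-zero ties in the Voronoi partition that defines $J_M$.
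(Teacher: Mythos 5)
Your argument is essentially the paper's own proof of Theorem~\ref{th:median_bound} (the symmetric case of the conjecture), recast in continuous form: both reduce via Lemma~\ref{lemma:median_optimal} to the second moment of $D=\min_{i\le n}|s_i-x^*|$, both use a local lower bound $f\ge c$ near $x^*$ to get an exponentially decaying tail $\Pr[D>\delta]\le(1-c\delta)^n\le e^{-cn\delta}$ on the near field, and both dispose of the far field with the crude bound $(1-\mu(\delta_0))^n$; the paper merely replaces your integral $\int_0^\infty 2\delta\,\Pr[D>\delta]\,d\delta$ by a step-function sum over intervals of width $1/(2n)$, split at $T_n=\lfloor 2n\eps\rfloor$. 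Two points in your write-up go beyond the paper and are worth keeping. First, your proof never invokes symmetry: the loss is by definition $\mathbb{E}[(s^*-x^*)^2]$, so no unbiasedness or variance step is needed, and the same argument covers the asymmetric distributions that the paper only supports by simulation. Second, your tail caveat is a genuine correction rather than a technicality. The paper's ``w.l.o.g.\ the support of $f$ is $[-1,1]$'' silently assumes bounded support, yet the model explicitly allows $\calX=[a,b]$ with $a,b=\pm\infty$; for a symmetric density whose tail mass satisfies $1-\mu(\delta)\sim 1/\log\delta$ one gets $\mathbb{E}[D^2]=+\infty$ for every $n$, so the conjecture as literally stated (``any distribution $f$'') is false, and even the theorem needs a tail or moment hypothesis (finite variance of $|s_1-x^*|$ already suffices, since then $\int_{\delta_0}^\infty 2\delta(1-\mu(\delta))\,d\delta<\infty$ and your factorization $(1-\mu(\delta))^n\le\alpha^{n-1}(1-\mu(\delta))$ makes the far field $O(\alpha^{n-1})=o(1/n^2)$). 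With that hypothesis added, your argument is complete and in fact proves more than the paper's theorem.
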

The rest of this section is dedicated to supporting this conjecture. In particular, we prove it for symmetric distributions, and show empirically that it holds for other distributions as well. Further, for Uniform and single-peaked distributions, we can upper-bound the constant in the expression. 

\begin{theorem} For the Median mechanism, $\loss^{P}(n) = O(\frac{1}{n^2})$, for any symmetric distribution $f$.\label{th:median_bound}
\end{theorem}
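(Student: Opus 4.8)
The plan is to offload the combinatorial content to Lemma~\ref{lemma:median_optimal} and reduce the theorem to a statement about order statistics. By that lemma $\md^{P}(S_N)=s^*$, the position of the sampled agent closest to the target $x^*=\md(f)$, so $\md^{P}(S_N)-\md(f)=s^*-x^*$ and therefore $\loss^{P}(n)=\mathbb{E}\big[(s^*-x^*)^2\big]=\mathbb{E}[D^2]$, where $D:=\min_{i\le n}|s_i-x^*|$ is the distance from the fixed point $x^*$ to the nearest of $n$ i.i.d.\ samples. So the theorem is exactly the statement that this nearest-sample squared distance is $O(1/n^2)$ --- the factor-$n$ improvement over the $\Theta(1/n)$ variance of the unweighted median given by Laplace's formula. (The symmetry hypothesis is not essential to this particular argument; it is used elsewhere in the section, e.g.\ to place $x^*$ at the center of symmetry and to get unbiasedness, but including it here does no harm.)

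Next I would estimate $\mathbb{E}[D^2]$ by the layer-cake formula $\mathbb{E}[D^2]=\int_0^\infty 2t\,\Pr[D\ge t]\,dt$ together with the independence identity $\Pr[D\ge t]=(1-p(t))^n$, where $p(t):=\int_{x^*-t}^{x^*+t}f(x)dx=F(x^*+t)-F(x^*-t)$ is the mass within distance $t$ of $x^*$; using $1-p\le e^{-p}$ gives $\Pr[D\ge t]\le e^{-n\,p(t)}$. Since $f$ is continuous and positive at $x^*$ (the mild regularity already invoked for Laplace's formula), there are $\delta>0$ and $\beta:=f(x^*)>0$ with $f\ge\beta/2$ on $[x^*-\delta,x^*+\delta]$, hence $p(t)\ge\beta t$ for $0\le t\le\delta$. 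Splitting the integral at $\delta$, the near part is at most $\int_0^{\infty}2t\,e^{-n\beta t}\,dt=\frac{2}{\beta^2 n^2}=O(1/n^2)$, which is the dominant term.

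What remains, and where the only genuine subtlety lies, is the far part $\int_\delta^\infty 2t\,(1-p(t))^n\,dt$. For $t\ge\delta$ we have $1-p(t)\le 1-p(\delta)=:1-q$ with $q\in(0,1)$ (shrink $\delta$ if needed), so I would peel off $n-1$ factors: $(1-p(t))^n\le(1-q)^{n-1}(1-p(t))=(1-q)^{n-1}\Pr[\,|s_1-x^*|\ge t\,]$, whence the far part is at most $(1-q)^{n-1}\int_0^\infty 2t\,\Pr[|s_1-x^*|\ge t]\,dt=(1-q)^{n-1}\,\mathbb{E}[(s_1-x^*)^2]$, which decays geometrically in $n$ and is thus $O(1/n^2)$. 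Adding the two pieces gives $\loss^{P}(n)\le\frac{2}{\beta^2 n^2}+(1-q)^{n-1}\mathbb{E}[(s_1-x^*)^2]=O(1/n^2)$. The main obstacle is precisely this tail term: it requires $f$ to have a finite second moment, and for genuinely heavy-tailed $f$ one must instead peel off only $n-m$ factors, where $m=m(f)$ is a fixed integer with $\mathbb{E}[(\min_{i\le m}|s_i-x^*|)^2]<\infty$, so that the bound holds for all $n\ge m$ --- enough for the asymptotic claim. (The ``$<$ versus $\le$'' boundary issues in $p(t)$ are harmless since a continuous $f$ has no atoms.)
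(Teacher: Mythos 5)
Your proof is correct and follows essentially the same route as the paper's: both reduce via Lemma~\ref{lemma:median_optimal} to bounding $\mathbb{E}\big[(\min_i|s_i-x^*|)^2\big]$, use $\Pr(\min>z)=(1-F'(z))^n$ together with the local lower bound $F'(z)\geq \alpha z$ near the median, and split into a near part contributing $O(1/n^2)$ plus an exponentially small far part --- you simply compute with the continuous layer-cake integral where the paper discretizes $[0,1]$ into $2n$ steps. Your extra care with the far part for unbounded support (peeling off $n-m$ factors and requiring a finite second moment of $\min_{i\le m}|s_i-x^*|$) in fact patches a point the paper glosses over by assuming ``w.l.o.g.''\ that the support is $[-1,1]$, which makes its tail term trivially at most $\Pr(s^*>\eps)\cdot 1$.
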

\begin{proof}
	W.l.o.g. we can assume $x^*=0$, and that the support of $f$ is the interval $[-1,1]$. 
	What is the expected distance between $s^*$ and $\md(f)=x^*=0$?  We can translate each proxy $x_i$ to $y_i = |x_i-x^*|=|x_i|$. Note that $y_i$ come from some distribution $f'$ on $[0,1]$. By our assumption, $f(x)$ is strictly positive in some $\eps$ environment of $x^*=0$, i.e.  $f(x)>\alpha$ for all $x\in [-\eps,\eps]$, for some $\alpha,\eps>0$. We thus have that $f'(z)>\alpha$ for all $z\leq \eps$. Then for the cumulative distribution $F'(z)$, we have that $F'(z)>z\alpha$ for all $z\leq \eps$, and $F'(z)> \eps\alpha$ for all $z>\eps$. 
	
	Recall that by Lemma~\ref{lemma:median_optimal}, the error is exactly $|s^*-x^*|=|s^*|$.
	
	The random variable $s^*=\min y_i$ is the minimum of $n$ variables sampled i.i.d. from $f'[0,1]$. The distribution of the minimum is well known and in particular for all $z\in[0,1]$,
	$$Pr(s^* > z) 
	= (Pr_{Z\sim f'[0,1]}(Z>z))^n = (1-F'(z))^n.$$
	For $z=\eps$, we get $Pr(s^* > \eps)=(1-F'(\eps))^n \leq (1-\eps\alpha)^n$. 
	
	There is some $n_\eps$ s.t. for all $n>n_\eps$,  $\Pr(s^*>\eps)  < \frac{1}{n^3}$ since the left terms drops exponentially fast. Thus assume $n>n_\eps$. Let $T_n= \floor{2n \cdot \eps}$ 
	
	\begin{align*}
		\loss &^{P}(n) 
		= VAR_f[s^*] 
		= E_f[(s^*)^2]   \leq \sum_{t=1}^{2n} Pr(s^*\in [\frac{t-1}{2n},\frac{t}{2n}])(\frac{t}{2n})^2 \tag{bound by steps}\\
		& = \sum_{t=1}^{T_n} Pr(s^*\in [\frac{t-1}{2n},\frac{t}{2n}])(\frac{t}{2n})^2 +\sum_{t=T_n+1}^{2n} Pr(s^*\in [\frac{t-1}{2n},\frac{t}{2n}])(\frac{t}{2n})^2 \\
		& \leq \sum_{t=1}^{T_n} Pr(s^*\in [\frac{t-1}{2n},\frac{t}{2n}]) (\frac{t}{2n})^2 + \sum_{t=T_n+1}^{2n} Pr(s^*\in [\frac{t-1}{2n},\frac{t}{2n}]) \\
		& \leq \sum_{t=1}^{T_n} Pr(s^*\in [\frac{t-1}{2n},\frac{t}{2n}]) (\frac{t}{2n})^2 + Pr(s^* > \eps) \\
		&  \leq \sum_{t=1}^{T_n} Pr(s^*\in [\frac{t-1}{2n},\frac{t}{2n}]) (\frac{t}{2n})^2 + \frac{1}{n^3} \\
		&= \sum_{t=1}^{T_n} (Pr(s^* > \frac{t-1}{2n})-Pr(s^* > \frac{t}{2n}))(\frac{t}{2n})^2 + \frac{1}{n^3} \\
		&\leq \sum_{t=1}^{T_n} Pr(s^* > \frac{t-1}{2n})(\frac{t}{2n})^2 + \frac{1}{n^3}\\
		&\leq \frac{1}{4n^2} \sum_{t=1}^{T_n}Pr(s^* > \frac{t-1}{2n})t^2 + \frac{1}{n^3}
		\leq \frac{1}{4n^2} \sum_{t=1}^{T_n}(1-F'(\frac{t-1}{2n}))^n t^2 + \frac{1}{n^3}\\
		&\leq \frac{1}{4n^2} \sum_{t=1}^{T_n}(1-\alpha \frac{t-1}{2n})^n t^2 + \frac{1}{n^3}
		\leq \frac{1}{4n^2} \sum_{t=1}^{T_n}e^{-\alpha \frac{t-1}{2}} t^2 + \frac{1}{n^3}\\
		& = \frac{1}{4n^2} \sum_{t=1}^{T_n}e^{-\Theta(t)} t^2 + \frac{1}{n^3}
		 < \frac{1}{4n^2} C + \frac{1}{n^3} = O(\frac{1}{n^2}),
	\end{align*}
	where $C$ is some constant.
\end{proof}

Further, for Uniform $f=U[-1,1]$ we can derive a tighter bound on the sum of the series and show $\loss^{P}(n) < \frac{4}{n^2}$. In fact for any single-peaked $f$ on $[-1,1]$ we have $\loss^{P}(n) < \frac{7}{n^2}$. 

\if 0
\begin{proof}[Proof sketch for Uniform] Suppose $f=U[-1,1]$.
	We translate each  position $s_i\in S_N$ to $y_i = |s_i-x^*|=|s_i|$ (as $x^*=0$). Note that $(y_i)_{i\in N}$ are uniformly distributed on $[0,1]$, and thus the random variable $s^*$ is the minimum of $n$ variables sampled i.i.d. from $U[0,1]$. 
	For all $z\in[0,1]$,
	\labeq{min_dist}
	{
	Pr(s^* > z) = Pr(\forall j\in N, s_j>z) 
	= (1-z)^n.}
	
	Note that $\loss ^{P}(n) = \var_{S_N\sim f^n}[\md^P(S_N)] 
	= \var[s^*] 
	= \mathbb{E}[(s^*)^2]$.  
	Let $c\in \mathbb N$ a constant to be determined later. 
	We now bound the variance by steps:
	$$\loss ^{P}(n)=E[(s^*)^2] \leq \sum_{t=1}^{cn} Pr\(s^*\in \left[\frac{t-1}{cn},\frac{t}{cn}\right]\)\(\frac{t}{cn}\)^2,$$ 
	and continue
	
	\begin{align*}
	\loss ^{P}&(n) = \sum_{t=1}^{cn} (Pr(s^* > \frac{t-1}{cn})-Pr(s^* > \frac{t}{cn}))(\frac{t}{cn})^2 \\
	&\leq \frac{1}{c^2n^2} \sum_{t=1}^{cn}((1-\frac{t-1}{cn})^n-(1-\frac{t}{cn})^n)t^2 \tag{By Eq.~\eqref{eq:min_dist}}\\
	&\leq \frac{1}{c^2n^2} \sum_{t=1}^{\infty}(e^{-\frac{t-1}{c}}-e^{-\frac{t+1}{c}})t^2  \stackrel{c\rightarrow \infty}{\rightarrow}\frac{1}{c^2n^2} 4c^2,
	\end{align*}
	thus 
	 $\loss^{P}(n) < \frac{1}{c^2n^2} 4c^2 =\frac{4}{n^2}$. 
\end{proof}
\fi

%
%
%

	
	We simulated the effect of proxy voting on the Median mechanism in Figure~\ref{fig:loss_m}. 
	 We can see that the (log of the) loss for each distribution closely resembles $\log(\frac{c}{n^2})  = c' - 2\log(n)$, where the constant $c'$ depends on the distribution. This also holds for the asymmetric distributions, which supports our Conjecture~\ref{conj:median}. In particular, this means that the loss under proxy voting drops much faster than the loss in the Basic scenario, which is roughly $\frac{1}{n}$.
	
	\begin{figure}[t]
		\centering
		\includegraphics[scale=0.42]{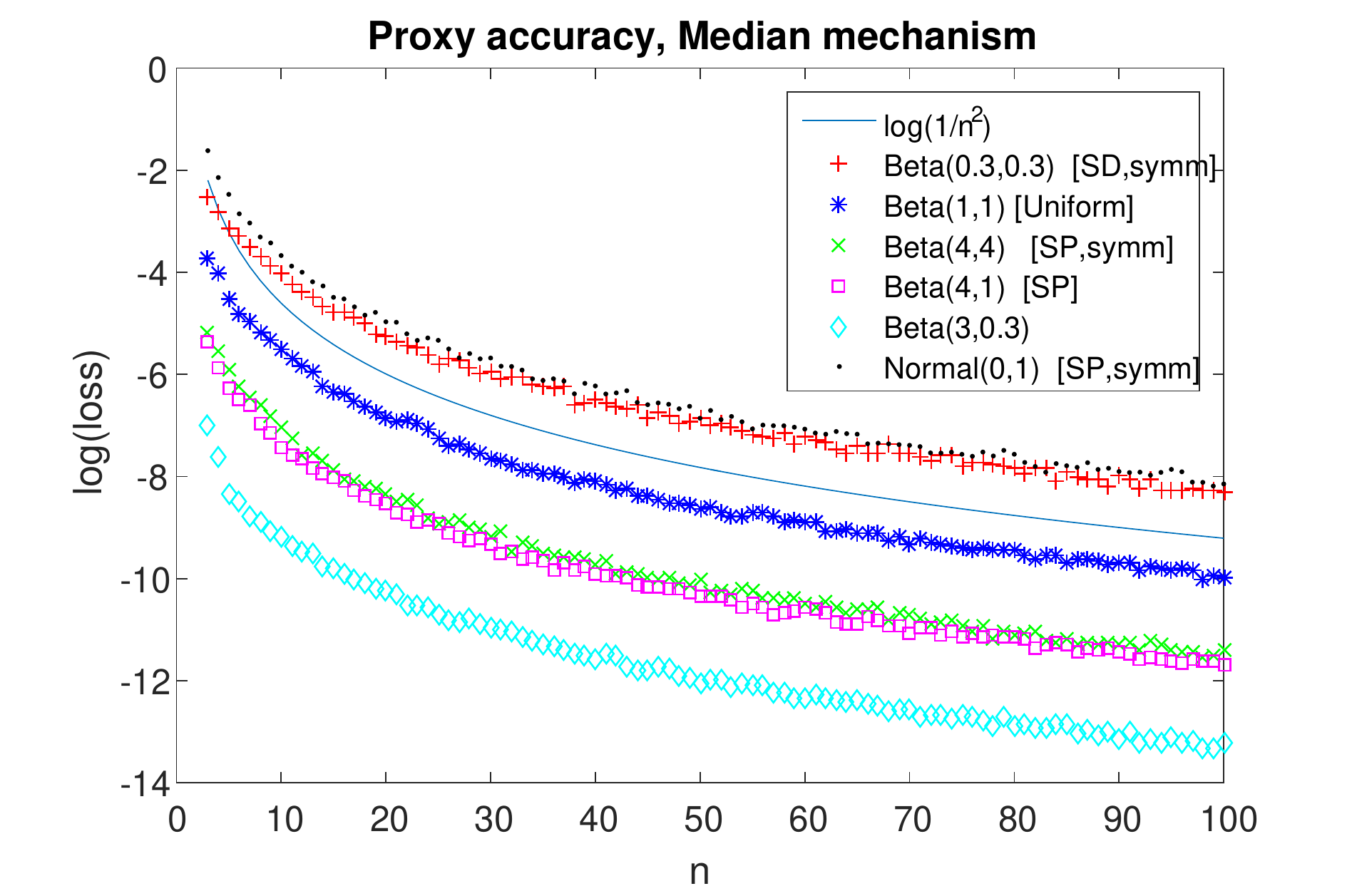} 
		\protect\caption{\label{fig:loss_m}The top figure shows $\loss^Q(n)$ (in log scale) as a function of $n$ for various distributions. SP/SD/symm stands for Single-peaked / Single-dipped/ Symmetric distributions.   Each point is based on 1000 samples of size $n$. 
		\vspace{-4mm} 
		}
		%
	\end{figure}
	%

\newsubsec{Strategic participation}
We show that when participation is strategic the outcome of proxy voting is not affected, whereas the unweighted sample median becomes unboundedly worse. 
Suppose all voters in $N$ are indexed in increasing order by their location, so that $\argmin S_N=1$.
\begin{proposition}
	In the Basic scenario, for any distribution $f$ and any set of agents $N$, there is a unique equilibrium of $\md^{B+L}$ where $M=\{1\}$ (i.e.,  the lowest agent). Further, the game is weakly acyclic, i.e. there is a sequence of best replies from any initial state to this equilibrium.
\end{proposition}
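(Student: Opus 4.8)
The plan is to work directly with the best‑reply structure of the participation game, using two elementary facts about the (low‑tie‑broken) sample median. Writing the sorted positions of an active set of size $k$ as $t_1<\dots<t_k$, the paper's definition gives $\md(S_M)=t_{\lceil k/2\rceil}$. From this, inserting a point to the left of all active positions moves the outcome weakly left, and deleting an extreme active position often leaves the outcome \emph{unchanged}. I assume throughout that the positions $s_1<s_2<\dots<s_n$ are distinct (the generic case) and that $M=\emptyset$ is not a legal state, so the median is always defined.

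First I would determine which sets $M$ are stable. The key computations are: (i) if $|M|=2m$, deleting the \emph{rightmost} active agent leaves a set of size $2m-1$ with median still $t_m$, so that agent is not pivotal and by lazy bias strictly prefers to be inactive; hence no even set of size $\ge 2$ is an equilibrium. (ii) If $|M|=2m+1\ge 3$, deleting the \emph{leftmost} active agent $t_1$ re-indexes the rest so the median stays $t_{m+1}$; again that agent is not pivotal, so by lazy bias it prefers to be inactive, and no odd set of size $\ge 3$ is an equilibrium. (iii) If $M=\{j\}$ with $j\ne 1$, the outcome is $s_j>s_1$, and agent~$1$ joining makes the outcome $\md(\{s_1,s_j\})=s_1$, i.e.\ agent~$1$'s ideal point — a strict improvement — so $M$ is not an equilibrium. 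The remaining candidate $M=\{1\}$ is an equilibrium: agent~$1$ is at its ideal point (and cannot become inactive), and any other agent $i$ joining would face outcome $\md(\{s_1,s_i\})=s_1$, unchanged, hence by lazy bias stays out. This proves both existence and uniqueness.

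For weak acyclicity I would convert the observations above into an explicit best‑reply path from an arbitrary $M$. As long as $|M|\ge 2$, let the non‑pivotal extreme agent deviate out — the rightmost if $|M|$ is even, the leftmost if $|M|$ is odd; by (i)–(ii) this is a best reply for that lazy‑biased non‑pivotal agent, and it decreases $|M|$ by one, so after at most $|M|-1$ steps we reach some singleton $\{j\}$. If $j=1$ we are done; otherwise agent~$1$ joins (a strict best reply, as in (iii)), giving $\{1,j\}$, and then agent~$j$, now the non‑pivotal rightmost agent of a $2$-element set, leaves, giving $\{1\}$. This is a best‑reply sequence from $M$ to the unique equilibrium.

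The main obstacle I anticipate is the bookkeeping in steps (i)–(ii): checking, against the lexicographic tie‑breaking, that the designated extreme agent is genuinely non‑pivotal — so that it is \emph{lazy bias}, not a strict preference, that drives it out — and that no other agent in a size‑$\ge 2$ set has a profitable deviation. Both follow from the monotonicity of the median under single‑point insertions and deletions, so they are routine once stated carefully. A secondary point is the degenerate cases (empty $M$, coincident positions), which I would dispatch with the standing genericity/nonemptiness assumptions noted above.
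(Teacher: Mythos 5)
Your proof is correct and takes essentially the same route as the paper's: in any active set of size at least two a non-pivotal extreme agent exists (the paper notes that \emph{all} agents above the median are non-pivotal when $|M|$ is even and all below it when $|M|$ is odd, whereas you single out the rightmost/leftmost one, which is an inessential difference) and leaves by lazy bias; any singleton $\{j\}$ with $j\neq 1$ is destabilized by agent~$1$ joining, since the low tie-breaking makes $\md(\{s_1,s_j\})=s_1$; and $\{1\}$ is stable. The best-reply path you construct for weak acyclicity is the same one the paper uses.
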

Clearly this means that $\loss^{B+L}(n)<\loss^{B}(n)$, and only gets worse as we increase the sample size $n$.

 The intuition is that due to tie-breaking, either all agents below current median, or all agents above it, are non-pivotal.
\cut{
\begin{proof}
	Note first that if $M=\{1\}$ then the single active agent is pivotal by definition. Any other agent $i>1$ is non-pivotal since $\md(\{s_1,s_i\})=s_1$ by our tie-breaking assumption, thus $M=\{1\}$ is an equilibrium.
	
	Consider any subset of active agents $M\subseteq N$ s.t. $|M|>1$. 
	If $m$ is even, then all agents above the median $\g(M)$ are non-pivotal. If $m$ is odd then all agents below the median are non-pivotal. Thus there is at least one agent in $M$ who prefers to become inactive. This continues until $|M|=1$. 
	
	Finally, if $M=\{i\}$ for some $i>1$, we have the following sequence of best-replies: any agent $j<i$ is pivotal, and in particular $j=1$. Thus agent~$1$ will become active. Now agent~$i$ is no longer pivotal so becomes inactive.
\end{proof}
}

On the other hand, while lazy bias decreases participation in the Proxy scenario, this does not increase the loss. 
\begin{theorem}
	In the Proxy scenario,  for any distribution $f$ and any set of agents $N$, there is a unique equilibrium of $\md^{P+L}$ where $M=\{j^*\}$ (the agent closest to $x^*$). Further, the game is weakly acyclic, i.e. there is a sequence of best replies from any initial state to this equilibrium.
\end{theorem}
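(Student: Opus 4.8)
The plan is to reduce the whole analysis to one structural fact: for \emph{every} nonempty active set $M\subseteq N$, the weighted-median outcome is pinned down by the active agent nearest $x^*=\md(f)$. Writing $j^*(M)=\argmin_{j\in M}\|s_j-x^*\|$ (ties broken consistently with the mechanism's tie-breaking), I claim $\md(S_M,\vec w_M)=s_{j^*(M)}$. This is precisely the computation in the proof of Lemma~\ref{lemma:median_optimal}, run with $M$ in place of $N$: since $x^*$ lies in the Voronoi cell of $j^*(M)$, every voter at $x\ge x^*$ is routed to an agent of $M$ weakly to the right of $j^*(M)$, so the total weight from $j^*(M)$ rightwards is $\ge 1/2$ and $\md(S_M,\vec w_M)\ge s_{j^*(M)}$; the mirror argument gives $\le s_{j^*(M)}$. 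Assuming distinct agent positions (the generic case, consistent with the indexing used in this section), two consequences are then immediate and will be used repeatedly: (i) if $j^*:=j^*(N)\in M$ then $j^*(M)=j^*$, so the outcome is $s_{j^*}$; and (ii) for any $M$, adding $j^*$ to the active set makes the outcome equal to $s_{j^*}$.

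Given (i)--(ii) the equilibrium analysis is short. $M=\{j^*\}$ is an equilibrium: the outcome is $s_{j^*}$, which is $j^*$'s ideal point, so $j^*$ does not want to leave (as in the Basic scenario, the sole active agent is pivotal), and for any $i\ne j^*$ the active set $\{j^*,i\}$ still yields $s_{j^*}$ by (i), so $i$ is non-pivotal and lazy bias keeps $i$ inactive. For uniqueness, let $M$ be any equilibrium. If $j^*\notin M$ (in particular if $M=\emptyset$), then by (ii) agent $j^*$ strictly gains by becoming active --- it moves the outcome to its ideal $s_{j^*}$, which differs from the current outcome since positions are distinct --- contradicting equilibrium; so $j^*\in M$ and the outcome is $s_{j^*}$. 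If moreover $M\supsetneq\{j^*\}$, take $i\in M\setminus\{j^*\}$: removing $i$ leaves $j^*$ active, so by (i) the outcome stays $s_{j^*}$, i.e.\ $i$ is non-pivotal and lazy bias makes $i$ strictly prefer inactivity --- again contradicting equilibrium. Hence $M=\{j^*\}$.

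For weak acyclicity I exhibit a best-reply path from an arbitrary state $M_0$ to $\{j^*\}$. Phase 1: if $j^*\notin M_0$, let $j^*$ move; by (ii) this switches the outcome to $j^*$'s ideal $s_{j^*}$, so it is $j^*$'s unique best reply. Phase 2: now $j^*$ is active, the outcome is $s_{j^*}$, and it remains $s_{j^*}$ as long as $j^*$ stays active (by (i)); hence every other currently active agent $i$ is non-pivotal, so lazy bias makes becoming inactive its best reply --- let such agents leave one at a time. Throughout Phase 2, $j^*$ never wants to leave (that would move the outcome off its ideal) and no inactive agent wants to join (joining does not change the outcome from $s_{j^*}$, so lazy bias keeps it out), so every step is a genuine best reply, and after at most $|N|-1$ steps the state is $\{j^*\}$.

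The one place that needs real care --- and the step I would be most cautious about --- is the structural fact together with its tie-breaking. One must fix a single convention for ``the active agent nearest $x^*$'' that is consistent with the mechanism's lexicographic tie-breaking, so that in the knife-edge cases (two active agents equidistant from $x^*$, or the routed mass splitting exactly evenly at a cell boundary sitting on $x^*$) the value $s_{j^*(M)}$ really is what the weighted median returns. Under distinct agent positions these degeneracies do not occur and the proof of Lemma~\ref{lemma:median_optimal} transfers verbatim; everything after that is a short case check driven by (i), (ii), and lazy bias.
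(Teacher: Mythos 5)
Your proposal is correct and follows essentially the same route as the paper: the paper's proof likewise rests on the fact that for any active set $M$ the weighted median is the position of the active agent whose Voronoi cell contains $x^*$ (i.e., Lemma~\ref{lemma:median_optimal} applied to $M$), from which it concludes that $j^*$ strictly gains by joining when inactive and that every other agent is non-pivotal when $j^*$ is active. You spell out the generalized structural fact, the tie-breaking caveat, and the explicit best-reply path for weak acyclicity, all of which the paper leaves implicit, but the underlying argument is the same.
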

In particular, $\loss^{P+L}(n)=\loss^{P}(n)$ for any distribution $f$.
\begin{proof} If $j^*\notin M$ is inactive, then for $M\cup \{j^*\}$ the outcome becomes $s_{j^*}$ rather than $s_k$ (where $k=J_M(x^*)$), which $j^*$ prefers.  If $j^*$ is active, and $j\neq j^*$ quits, then all votes above $s_{j^*}$ are still mapped to $j^*$ or higher (and similarly for votes below $s_{j^*}$). Thus the outcome remains the same which means $j$ is not pivotal. 
\end{proof}


\section{Mean Voting on an Interval}
The Mean mechanism is perhaps the simplest and most common way to aggregate positions. For positions $S_N$ on the interval the outcome is $\mn(S_N)=\frac{1}{n}\sum_{i\in N}s_i$, which is known to minimize the sum of \emph{square distances} to all agents. 
\label{sec:mean}
\newsubsec{Random Participation}
Assume that $f$ is a symmetric distribution, so that $\mn^Q(S_N)$ is unbiased under all scenarios. 
When we apply the Mean mechanism, the loss in the basic scenario is simply the sample variance.


\begin{proposition}
	Let $f$ be a symmetric, weakly single-peaked distribution, and suppose $|N|=2$. Then, for any $S_N$, $\|\mn^P(S_N)-x^*\| \leq \|\mn^B(S_N)-x^*\| $. That is, for any pair of agents the proxy-weighted mean is weakly better than the unweighted mean. 
\end{proposition}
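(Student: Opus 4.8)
The plan is to turn the claim into a single scalar inequality about $f$ and then attack that. Write $s_1\le s_2$ for the two positions and set $m:=\mn^B(S_N)=\tfrac{1}{2}(s_1+s_2)$. I would first normalize so that $x^*=\mn(f)=0$; since reflecting $f$ about $0$ reflects both outcomes, I may also assume $m\ge 0$. With only two active agents the Voronoi boundary is the midpoint $m$, so the proxy weights are $w_1=F(m)$, $w_2=1-F(m)$, and $\mn^P(S_N)=F(m)\,s_1+(1-F(m))\,s_2$. The useful identity is $\mn^P(S_N)-m=(F(m)-\tfrac12)(s_1-s_2)$. Because $f$ is symmetric, $F(0)=\tfrac12$, and $F$ is nondecreasing, so $F(m)\ge\tfrac12$; together with $s_1\le s_2$ this already gives $\mn^P(S_N)\le m$. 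Since $m\ge 0$, the proposition follows once we show $\mn^P(S_N)\ge -m$, i.e. that the weighted mean never overshoots the optimum by more than the unweighted mean does.

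Using $s_2-s_1=2(s_2-m)$ and $F(m)-\tfrac12=\int_0^m f(t)\,dt$, the inequality $\mn^P(S_N)\ge -m$ rearranges to the crux inequality
\[
\Big(\int_0^m f(t)\,dt\Big)\,(s_2-m)\ \le\ m .
\]
If $s_1\ge 0$ this is immediate ($s_2-m=m-s_1\le m$ while $\int_0^m f\le\tfrac12$), so the content lies entirely in the straddling case $s_1<0<s_2$, where $s_2-m=m-s_1$ can exceed $m$. My plan there is: (i) if $m$ lies beyond the right end of $\mathrm{supp}(f)$ then $s_2-m\le 0$ and we are done, so assume $m$ is interior; (ii) bound $s_2-m\le L-m$, where $[-L,L]$ is the (symmetric) smallest interval containing $\mathrm{supp}(f)$, assumed here to be bounded---the unbounded case would need a separate tail estimate; (iii) use weak single-peakedness, namely that $f$ is nonincreasing on $[0,L]$, to bound $\int_0^m f$ in terms of $m$, $L$ and the tail mass $\int_m^L f$, reducing the statement to a one-variable inequality in $m\in[0,L]$ that I would settle by checking its values at $m=0$ and $m=L$ together with the sign of its derivative (where $f$ nonincreasing is what controls the term $f(m)(L-m)$ against $\int_0^m f$).

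I expect step (iii) to be the main obstacle. It is the only place where weak single-peakedness (rather than mere symmetry plus monotonicity of $F$) is genuinely needed, and it is delicate precisely because $\mn^P(S_N)$ can land on the opposite side of $x^*$ from $\mn^B(S_N)$: any crude bound that merely says ``the weighted mean lies between $s_1$ and $s_2$'' is far too weak, and one must exploit that the mass pulling the weighted mean off-center, $\int_0^m f$, is itself constrained by how far the nearer agent can sit from $x^*$. If the plain estimate $s_2-m\le L-m$ turns out to be too lossy, the fallback is to keep $s_2$ and $\int_0^m f$ coupled through the requirement that both $s_2$ and $s_1=2m-s_2$ lie in $\mathrm{supp}(f)$, and to bound the crux quantity by optimizing over the worst-case symmetric single-peaked $f$ for fixed $m$ and $s_2$.
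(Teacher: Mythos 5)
Your reduction is correct as far as it goes, and it isolates precisely the direction that carries all the content: the easy bound $\mn^P(S_N)\le m$ (from $F(m)\ge\tfrac12$) versus the nontrivial ``no overshoot'' bound $\mn^P(S_N)\ge -m$, which you correctly rewrite as the crux inequality $\bigl(\int_0^m f\bigr)(s_2-m)\le m$. This is in fact sharper bookkeeping than the paper's own proof, which derives the chain $\mn(S_N,\vec w)\le s_2+\tfrac{m+1}{2}(s_1-s_2)=m\bigl(1+\tfrac{s_1-s_2}{2}\bigr)$ and then reads off membership in $[-m,m]$ from that \emph{upper bound} alone --- i.e., it silently skips exactly the lower-bound direction you flag as the main obstacle.

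The problem is that your step (iii) cannot be completed, because the crux inequality is false for general symmetric, weakly single-peaked $f$: weak single-peakedness pushes in the \emph{wrong} direction. Concavity of $F$ on $[0,\infty)$ gives $F(m)\ge\tfrac{m+1}{2}$, i.e., a \emph{lower} bound on $\int_0^m f$, whereas you need an \emph{upper} bound --- the more mass is concentrated near $x^*$, the more weight the far agent $s_1$ receives and the worse the overshoot. Concretely, let $f$ be the symmetric step density on $[-1,1]$ with $f(t)=49$ for $|t|\le 0.01$ and $f(t)=0.02/1.98\approx 0.0101$ otherwise (weakly single-peaked per the paper's definition), and take $S_N=\{-0.4,\,0.5\}$. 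Then $m=0.05$, $F(m)\approx 0.99$, hence $\mn^P(S_N)\approx -0.39$ while $\|\mn^B(S_N)-x^*\|=0.05$; your crux quantity is about $0.49\times 0.45=0.22>0.05=m$. The hypothesis that would make your plan (and the paper's intended argument) go through is the reverse curvature $F(m)\le\tfrac{m+1}{2}$ on $[0,\infty)$, which holds for symmetric single-dipped densities and, as the boundary case, for the uniform distribution, where $\mn^P(S_N)=m\bigl(1+\tfrac{s_1-s_2}{2}\bigr)\in[0,m]$ exactly. So rather than hunting for a cleverer estimate in step (iii), the honest conclusion is that the proposition as stated fails, and only the $F(m)\le\tfrac{m+1}{2}$ regime is salvageable.
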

\rmr{We need this proposition in the strategic section to conclude $\loss^{P+L} \leq \loss^{P} < \loss^{B} = \loss^{B+L}$.} 

\begin{proof}
	Suppose w.l.o.g. that the support of $f$ is $[-1,1]$, that $f$ is symmetric around $x^*=0$, that $s_1<s_2$, and that $x=\mn(S_N) = \frac{s_1+s_2}{2} \geq 0$. 
	Then for the basic (unweighted) scenario, 
	$$\|\mn^B(S_N)-x^*\| =|\mn^B(S_N)|=|\mn(S_N)|= |x|=x.$$ 
	Since $f(\cdot)$ in single-peaked, the CDF $F(\cdot)$ is convex in $[-1,0]$ and concave in $[0,1]$, thus for all $z\geq 0$, $F(z)\geq\frac{z+1}{2}$. In particular $F(x) \geq \frac{x+1}{2}$. 
	
	In the proxy (weighted) scenario, agent~1 gets all voters below point $x$, i.e. $w_1=F(x)$, whereas $w_2=1-F(x)$. Thus 
	\begin{small}
		\begin{align*}
	&\mn(S_N,\vec w) = w_1 s_1 + w_2 s_2 = F(\hat x) s_1 + (1-F(\hat x))s_2\\
		&= s_2+F(\hat x)(s_1-s_2) \leq s_2 + \frac{\hat x+1}{2}(s_1-s_2) \tag{as $s_1-s_2<0$}\\
		&= \frac{s_1+s_2}{2} + \hat x\frac{s_1-s_2}{2} = \hat x+ \hat x\frac{s_1-s_2}{2} \\  
		&= \hat x(1+\frac{s_1-s_2}{2})\in  [-\hat x,\hat x].~~\Rightarrow \tag{since $-1 \leq \frac{s_1-s_2}{2} < 0$}\\
		&\|\mn^P(S_N)-x^*\|  = |\mn(S_N,\vec w)| \leq |\hat x| = \|\mn^B(S_N)-x^*\|,
		\end{align*}
	\end{small}
	as required.
\end{proof}

\cut{
\begin{proof}[Proof sketch]
	Suppose w.l.o.g. that the support of $f$ is $[-1,1]$, that $f$ is symmetric around $x^*=0$, that $s_1<s_2$, and that $\hat x=\mn(S_N) = \frac{s_1+s_2}{2} \geq 0$. 
	Then for the basic (unweighted) scenario, 
	$\|\mn^B(S_N)-x^*\| =|\mn^B(S_N)|=|\mn(S_N)|= |\hat x|=\hat x$.
	
	Since $f$ in single-peaked, $F$ is convex in $[-1,0]$ and concave in $[0,1]$, thus for all $z\geq 0$, $F(z)\geq\frac{z+1}{2}$. In particular $F(\hat x) \geq \frac{\hat x+1}{2}$. 
	
	In the proxy (weighted) scenario, agent~1 gets all voters below point $\hat x$, i.e. $w_1=F(\hat x)$, whereas $w_2=1-F(\hat x)$. We can compute the weights and show that $\mn(S_N,\vec w) =(1+\frac{s_1-s_2}{2})\in  [-x,x]$.
	
	This means that $\|\mn^P(S_N)-x^*\|  = |\mn(S_N,\vec w)| \leq |\hat x|$, i.e. weakly better than $\mn^B(S_N)$.
\end{proof}
\rmr{can shorten to a proof sketch}
}

For larger sets of agents this is not true in general. Even for the Uniform distribution there are examples with more agents where proxy voting leads to a less accurate outcome: 
\begin{center}
	\begin{tikzpicture}[scale=0.8,transform shape]
	\def\rr{0.05}
	\draw[thick] (0,0) -- (10,0);
	\draw[fill] (2.5,0.1) circle [radius=\rr];
	\draw[fill] (2.5,0.25) circle [radius=\rr];
	\draw[fill] (10,0.1) circle [radius=\rr];
	\node [above] at (0,0.2) {$0$};
	\node [above] at (10,0.2) {$1$};
	\node [above] at (4.5,0.0) {$\mn(S_N)=\frac12$};
	\node at (5,0.05) {$*$};
	\node [below] at (6.5,0.05) {$\mn(S_N,\vec w_N)=\frac{17}{32}$};
	\node at ( 5.3125,-0.08) {$*$}; 
	\end{tikzpicture}
\end{center}
Consider 3 agents on $\calX=[0,1]$, located at $S_N=\{\frac{1}{4},\frac{1}{4},1\} $.  For a Uniform distribution $f$, the optimal outcome is $ x^* = \mn(f)=\frac{1}{2}$. In the Basic scenario, $\mn^B(S_N)= \frac{1}{2}$  while with proxies,\\
$\mn^P(S_N) = \mn^P(S_N,\vec w_N=\{ \frac58,0,\frac38 \}) = \frac{1}{4}\frac58+1\frac38=\frac{17}{32} $.


The question is under which  
distributions $f$ the loss is improved \emph{on average} by weighing the samples. We show analytically that this holds for uniform distributions and provide similar simulation results for other distributions.

\newpar{Uniform distribution}

Consider the uniform distribution over
the interval $[-1,1]$ (w.l.o.g., as we can always rescale). 
In the Basic scenario, we know from \cite{arnold1992first} that  $\loss^{B}(n)=\var[\mn(S_{N})]=\frac{1}{3n}$. 
The next proposition indicates that the loss under proxy voting decreases quadratically faster than without proxies (as with the median mechanism).

\begin{proposition} \label{th:w_mean_var}
	For the Mean mechanism, when $f=U[-1,1]$,\\
	$\loss^P(S_N) =\frac{8}{n^{2}}(1+O(\frac{1}{n}))$.
\end{proposition}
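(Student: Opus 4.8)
The plan is to reduce the claim to a variance computation and then evaluate it through the order statistics of the sample. Since $f=U[-1,1]$ is symmetric about $x^*=\mn(f)=0$, the reflection $x\mapsto-x$ preserves $f^n$ and negates $\mn^P(S_N)$, so $\mn^P$ is unbiased for $x^*$ and $\loss^P(n)=\var_{S_N\sim f^n}\bigl[\mn^P(S_N)\bigr]$; it thus suffices to produce an explicit formula for $\mn^P(S_N)$ and compute its variance.

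The first substantive step is to write $\mn^P(S_N)$ in closed form in terms of the order statistics $s_{(1)}\le\cdots\le s_{(n)}$. Because $f$ has constant density $\tfrac12$ on $[-1,1]$, the weight of the $j$-th smallest active agent is exactly half the length of its Voronoi cell, which is read off from the midpoints with its two neighbours: $w_j=\tfrac14\bigl(s_{(j+1)}-s_{(j-1)}\bigr)$ for the interior agents $2\le j\le n-1$, while the two endpoint cells reach the boundaries $\pm1$ and give $w_1=\tfrac14\bigl(s_{(1)}+s_{(2)}+2\bigr)$ and $w_n=\tfrac14\bigl(2-s_{(n-1)}-s_{(n)}\bigr)$. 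Substituting into $\mn^P(S_N)=\sum_j w_j s_{(j)}$, the interior contributions telescope and the sum collapses to a short expression governed by the two extreme positions $s_{(1)},s_{(n)}$ (equivalently, by the boundary gaps $s_{(1)}+1$ and $1-s_{(n)}$). An equivalent route is to note $\mn^P(S_N)=\tfrac12\int_{-1}^{1}g(x)\,dx$, where $g(x)$ is the position of the active agent nearest $x$, and integrate the piecewise-linear ``sawtooth'' $g(x)-x$ cell by cell.

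The second substantive step is to evaluate the variance of that expression using the classical distribution theory of uniform order statistics. Under the change of variables $u_i=\tfrac12(s_i+1)\sim U[0,1]$, the minimum and maximum have marginals $\mathrm{Beta}(1,n)$ and $\mathrm{Beta}(n,1)$ and joint density $n(n-1)(v-u)^{n-2}$ on $0\le u\le v\le 1$, so every moment I need is a Beta (Dirichlet) integral: $\mathbb E[u_{(1)}^k]=\tfrac{k!\,n!}{(n+k)!}$ and, more generally, $\mathbb E\bigl[u_{(1)}^a(1-u_{(n)})^b\bigr]=n(n-1)\,\tfrac{\Gamma(a+1)\Gamma(b+1)\Gamma(n-1)}{\Gamma(n+a+b+1)}$. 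Substituting these turns $\loss^P(n)$ into an explicit rational function of $n$, whose expansion in powers of $1/n$ gives the stated form $\tfrac{8}{n^2}\bigl(1+O(\tfrac1n)\bigr)$. (One can also obtain the leading constant quickly by passing to the limit in which $n\,u_{(1)}$ and $n(1-u_{(n)})$ converge to independent $\mathrm{Exp}(1)$ variables, and then bound the lower-order terms by a direct tail estimate.)

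The step I expect to be the main obstacle is the closed-form reduction: the telescoping identity applies only to the interior cells, so the two boundary cells must be handled separately, and one has to be careful that the individually $\Theta(1/n)$-sized contributions of $s_{(1)}$ and $s_{(n)}$ combine into the correct net quantity — this is precisely where the leading-order behaviour of $\loss^P(n)$ is pinned down, and a sloppy treatment of the endpoints changes the answer. After that, the remaining work (expanding the rational function, isolating the $8/n^2$ term, and bounding the remainder as $O(1/n)$ relative) is routine; carrying everything through with the exchangeable Dirichlet spacings of the sample rather than the raw order statistics tends to make the bookkeeping cleaner.
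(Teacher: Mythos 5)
Your outline follows the paper's proof step for step: unbiasedness by symmetry, Voronoi weights $w_j=\tfrac14\bigl(s_{(j+1)}-s_{(j-1)}\bigr)$ with the same boundary conventions, telescoping to an expression in the two extremes, and Beta/Dirichlet moments of $(s_{(1)},s_{(n)})$. The method is sound, but there is a genuine gap at the very point you flagged as the danger spot: you assert, without carrying out the computation, that the resulting rational function of $n$ expands as $\tfrac{8}{n^2}\bigl(1+O(\tfrac1n)\bigr)$. It does not. Writing $g_1=1+s_{(1)}$ and $g_2=1-s_{(n)}$ for the two boundary gaps, the telescoped sum is exactly
\[
\mn^P(S_N)\;=\;\frac{s_{(1)}+s_{(n)}}{2}+\frac{s_{(1)}^2-s_{(n)}^2}{4}\;=\;\frac{g_1^2-g_2^2}{4},
\]
so the $\Theta(1/n)$ contributions of the two extremes cancel \emph{identically}, not merely up to lower order. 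Since $g_1/2$ and $g_2/2$ are the minimum and one-minus-maximum of $n$ uniforms on $[0,1]$, the Dirichlet moments you quote give
\[
\loss^P(n)\;=\;\mathbb E\left[\left(\frac{g_1^2-g_2^2}{4}\right)^2\right]\;=\;\frac{40}{(n+1)(n+2)(n+3)(n+4)}\;=\;\Theta(n^{-4}),
\]
which you can sanity-check at $n=1$ (value $\tfrac13$, as it must be, since a lone agent is a dictator) and at $n=2$ (value $\tfrac19$ by direct integration). The figure $\tfrac{8}{n^2}$ (resp.\ $\tfrac{2}{n^2}$) is the variance of $s_{(1)}+s_{(n)}$ (resp.\ of the midrange $\tfrac{s_{(1)}+s_{(n)}}{2}$), i.e.\ exactly what one obtains by dropping the quadratic correction term from the telescoped expression.

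So the proposed argument, executed correctly, refutes the stated asymptotic rather than proving it. For what it is worth, the paper's own proof is in the same position: it derives the same telescoped formula and then quotes a second-moment expression of order $n^{-2}$ that already fails at $n=2$ (it evaluates to $1/630$ versus the true $1/9$), and its stated constant $8$ does not match the constant $2$ appearing in its own derivation. The statement your method actually supports is $\loss^P(n)=\frac{40}{n^4}\bigl(1+O(\tfrac1n)\bigr)$; note that this only strengthens the qualitative conclusion, since $\loss^B(n)=\tfrac{1}{3n}$.
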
 

\begin{proof}[Proof]
	We first note that the weighted mean $\mn\left(S_{N},\mathbf{w}_N\right)$,
	is an unbiased estimator of the distribution mean from symmetry argument, and therefore $\loss^P(S_N) = \var[\mn(S_{N},\vec{w}_N)]$.
	We now turn to evaluate this term. $\var[\mn(S_{N})]=\mathbb{E}\left[\left(\mn(S_{N})\right)^{2}\right]$.

	\[	\mn\left(S_{N},\mathbf{w}_N\right)=\frac{1}{n}\sum_{j=1}^{n}w_{j}s_{j} 	\]
	
	Here $w_{j}$ is the number of voters that elect representative
	$j$ as their proxy. Since the number of vote $n$ is large, $w_{\alpha}$
	is the corresponding share of the probability distribution, 
	In the Uniform distribution $U[-1,1]$ we can compute the weights:
\begin{align*}
	w_{j}&=F\left(\frac{s_{j+1}+s_{j}}{2}\right)-F\left(\frac{s_{j-1}+s_{j}}{2}\right)\\
	&=\frac12 (\frac{s_{j+1}+s_{j}}{2}-(-1)) + \frac12(\frac{s_{j-1}+s_{j}}{2}-(-1))=
	\frac{1}{4}\left(s_{j+1}-s_{j-1}\right)
	\end{align*}

	where we set $s_{0}=-2-s_{1}$, $s_{n+1}=2-s_{n}$ for convenience.
	Therefore $\mathbf{mn}\left(S_{N},\mathbf{w}_N\right)$ can be written as
	
	\labeq{w_avg}
	{\sum_{i=1}^{n}w_{j}s_{j}  =  \frac{1}{4}  \sum s_{j}\left(s_{j+1}-s_{j-1}\right)
		 =  \frac{s_{n}+s_{1}}{2}+\frac{s_{1}^{2}-s_{n}^{2}}{4}
	}
	by telescopic cancellation. Here $s_{n}$ and $s_{1}$ are the two
	extremes representatives. Now, since the joint distribution of $(s_{1},s_{n})$
	is explicitly known~\cite{arnold1992first},
	\[
	\Pr\left(s_{1}=x,s_{n}=y\right)=n\cdot\left(n-1\right)\cdot\left(\frac{1}{2}\right)^{2}\cdot\left(\frac{y-x}{2}\right)^{n-2}
	\]

	it is possible to evaluate it precisely by integration. We get \rmr{I divided Eli's result by 4. hope it's ok}
	\begin{align*}
	 & \mathbb{E}\left[\mathbf{mn}\left(S_{N},\mathbf{w}\right)^{2}\right] \\ 
	&=\frac{2((n-5)n+14)n\cdot\left(n-1\right)}{\prod_{t=1}^6(n+t)}\\
	 &= \frac{2}{n^{2}}\left(1+O(n^{-1})\right). \\
	 &= \frac{2}{n^{2}}\left(1+O(\frac{1}{n})\right)  . 
	\end{align*}
	
	We should note that the estimator $\frac{s_1+s_n}{2}$ is known to minimize the MSE for the uniform distribution. It is interesting that the estimator obtained by proxy voting is so similar.
	\rmr{The bound in the proposition is obtained by rescaling}
\end{proof}

\cut{
Recall that  $\frac{s_1+s_n}{2}$ is the maximum likelihood estimator of $\mathbb{E}[f]=\mn(f)$ for the uniform distribution, and
$\var\left[s_{1}+s_{n}\right] =\frac{2}{n^{2}}\left(1+O(\frac{1}{n})\right)$, which means that $\mathcal{L^P}/\mathcal{L^B}\rightarrow 4$.}

%
%
%
%
%
%

While proxy voting may have adverse effect on the mean in specific samples, our proof shows that \emph{on average}, proxy voting leads to a substantial gain under the Uniform distribution. Other common distributions displayed the same effect.  Fig.~\ref{fig:loss_mn} shows proxy voting leads to a substantial improvement over the unweighted mean of active voters for various distributions.

\begin{figure}[t]
	\centering
	\includegraphics[scale=0.42]{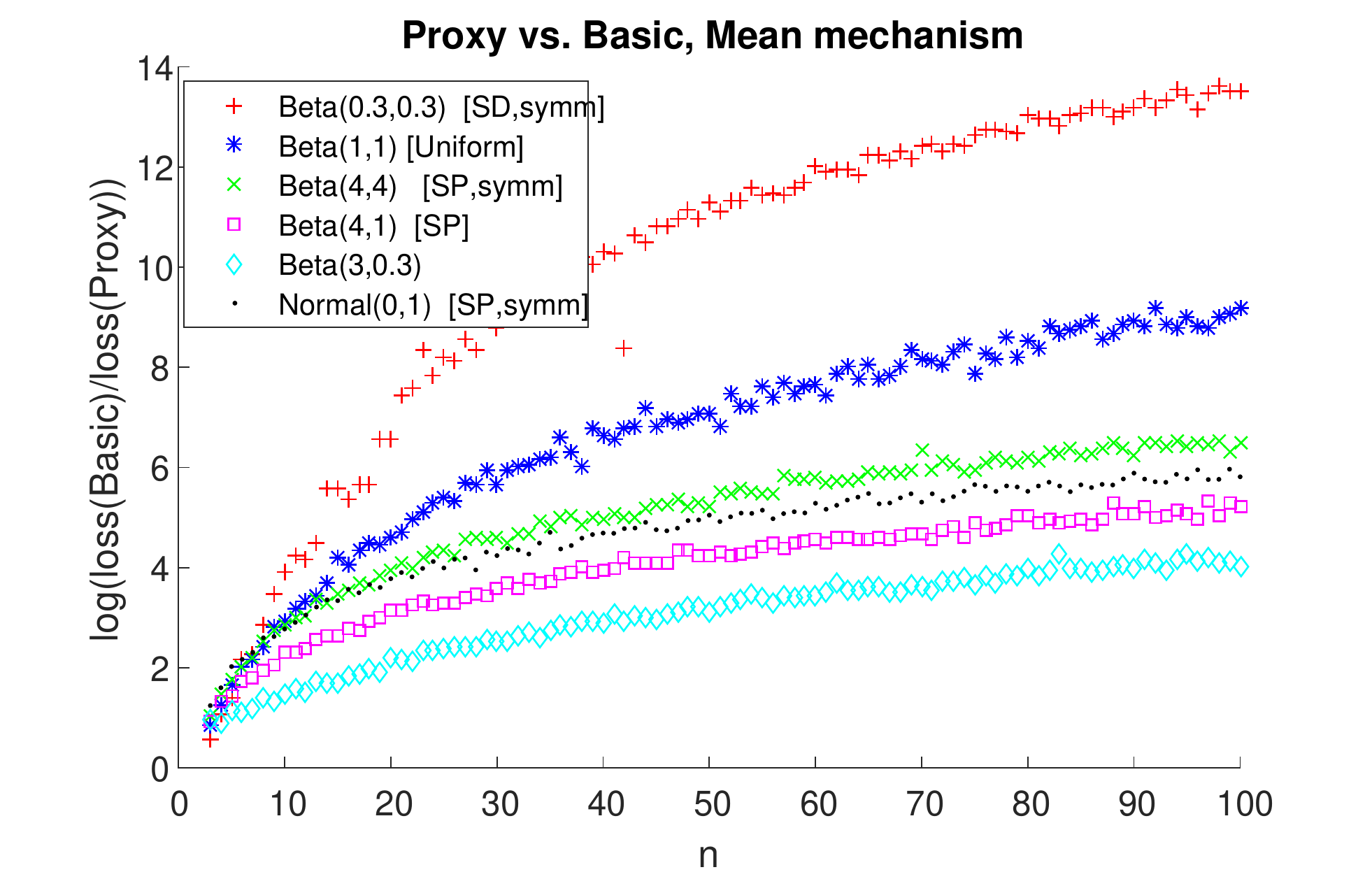} 
	\protect\caption{\label{fig:loss_mn}The ratio of $\loss^B(n)$ and $\loss^P(n)$ (in log scale) as a function of $n$.
	\vspace{-4mm}}
\end{figure}

\newsubsec{Strategic participation}
In the basic (non-proxy) scenario, it is easy to see that every voter is always pivotal with any active set unless $s_i=\mn(M)$. Thus in every equilibrium $M\subseteq N$, $\mn(S_M)=\mn(S_N)$, and for any distribution $f$, and $\loss^{B+L}(n)=\loss^B(n)$.

In the proxy setting things get more involved. The following lemma analyzes the best response of agents in cases where the voter's population is monotonic is some region.

\begin{lemma}
	\label{lem:mean-strategic}
	%
	(A) It is a dominant strategy for both  $\argmin_i\left\{ S_{N}\right\} $ and $\argmax_i\left\{ S_{N}\right\}$ to be active; 
	
	(B) Consider three agents, $s_{1}<s_{2}<s_{3}$ s.t. $\{1,3\}\subseteq M$. Suppose $f$ is strictly decreasing in $[s_1,s_3]$. Agent~2 prefers to be active if $\mn\left(S_{M \cup \{s_2\}}\right)<s_{2}$, and prefers to be inactive if $\mn\left(S_{M \setminus \{s_2\}}\right)\geq s_{2}$. The reverse condition applies for increasing $f$. If $f$ is constant, agent~2 always prefer to be inactive. 
\end{lemma}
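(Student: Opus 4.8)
The plan is to study how the weighted mean $\mn(S_M,\vec w_M)=\sum_{j\in M}w_js_j$ (the two expressions coincide because $f$ is a probability density, so $\sum_j w_j=1$) changes when a single agent is added to the active set $M$. The point is that such a toggle only perturbs the Voronoi cell of the toggled agent and of its two active neighbours, so the resulting change in the outcome has a closed form, and its sign is governed by the local monotonicity of $f$.

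For part (A), let $i=\argmin_i\{S_N\}$, which is a.s.\ the unique leftmost sampled point. Fix any set $M'\subseteq N\setminus\{i\}$ of other active agents (if $M'=\varnothing$ then being active yields the ideal outcome $s_i$, so assume $M'\neq\varnothing$) and compare active sets $M'$ (agent $i$ inactive) and $M'\cup\{i\}$ (agent $i$ active); write $j_1=\argmin M'$, so $s_{j_1}>s_i\ge a$. Inserting $i$ into $M'$ transfers exactly the subinterval $[a,\tfrac{s_i+s_{j_1}}{2}]$ of $j_1$'s cell to $i$ and changes nothing else, so the outcome decreases by $F\!\big(\tfrac{s_i+s_{j_1}}{2}\big)(s_{j_1}-s_i)\ge 0$. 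Since in either case the outcome is a weighted average of positions all $\ge s_i$, it never drops below $s_i$; hence inserting $i$ moves the outcome weakly closer to $s_i$, which is what $i$ wants. Thus being active is (weakly) dominant for $i$, and strictly dominant whenever $F\!\big(\tfrac{s_i+s_{j_1}}{2}\big)>0$ (in particular whenever $f$ has full support); the rightmost agent is handled by a symmetric argument.

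For part (B) I compare active sets $M$ (assume w.l.o.g.\ $2\notin M$) and $M\cup\{2\}$. Let $L,R\in M$ be the active agents immediately to the left and right of $s_2$; because $1,3\in M$ we have $s_1\le s_L<s_2<s_R\le s_3$. Put $a_1=\tfrac{s_L+s_2}{2}$, $b=\tfrac{s_L+s_R}{2}$, $a_2=\tfrac{s_2+s_R}{2}$, so $a_1<b<a_2$; inserting agent~$2$ transfers $[a_1,b]$ from $L$ to $2$ and $[b,a_2]$ from $R$ to $2$ and leaves all other cells fixed. Telescoping as in part (A),
\[
\Delta \;:=\; \mn(S_{M\cup\{2\}},\vec w_{M\cup\{2\}})-\mn(S_{M},\vec w_{M})
\;=\; \big(F(b)-F(a_1)\big)(s_2-s_L)-\big(F(a_2)-F(b)\big)(s_R-s_2).
\]
The crux is the sign of $\Delta$, and the decisive elementary identities are $b-a_1=\tfrac{s_R-s_2}{2}$ and $a_2-b=\tfrac{s_2-s_L}{2}$. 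If $f$ is constant on $[s_1,s_3]$ then $F(b)-F(a_1)$ and $F(a_2)-F(b)$ are proportional to these lengths and a direct substitution gives $\Delta=0$. If $f$ is strictly decreasing on $[s_1,s_3]\supseteq[s_L,s_R]$, then $f>f(b)$ on $[a_1,b)$ and $f<f(b)$ on $(b,a_2]$, so $F(b)-F(a_1)>f(b)(b-a_1)$ and $F(a_2)-F(b)<f(b)(a_2-b)$; plugging these bounds into $\Delta$ and using the two length identities yields $\Delta> f(b)\big[\tfrac{s_R-s_2}{2}(s_2-s_L)-\tfrac{s_2-s_L}{2}(s_R-s_2)\big]=0$. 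Strict monotone increase gives $\Delta<0$ by the mirror argument.

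It then remains to convert the sign of $\Delta$ into agent~$2$'s best response. Write $o_0=\mn(S_{M\setminus\{2\}},\vec w_{M\setminus\{2\}})$ and $o_1=o_0+\Delta$ for the outcomes when $2$ is inactive resp.\ active; by lazy bias agent~$2$ plays active iff $|o_1-s_2|<|o_0-s_2|$ and inactive otherwise. When $f$ is decreasing ($\Delta>0$): if $o_1<s_2$ then $o_0<o_1<s_2$, so $o_1$ is strictly nearer $s_2$ and agent~$2$ is active; if $o_0\ge s_2$ then $s_2\le o_0<o_1$, so $o_0$ is strictly nearer $s_2$ and agent~$2$ is inactive. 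The increasing case is the mirror image (interchange the roles of $o_0$ and $o_1$ and reverse the inequalities), which is exactly the asserted reverse condition, and the constant case gives $o_1=o_0$, so by lazy bias agent~$2$ is inactive. The step I expect to need the most care is the sign computation for $\Delta$: pinning down the two transferred subintervals and their lengths precisely, and then combining strict monotonicity of $f$ with those lengths to reach the clean sign trichotomy for $\Delta$; everything else is bookkeeping about which Voronoi cells move.
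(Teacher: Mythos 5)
Your proof is correct and follows essentially the same route as the paper's: decompose the change in the weighted mean caused by toggling agent~2 into the two transferred Voronoi pieces, so that $\Delta$ becomes $(s_2-s_L)\int f + (s_2-s_R)\int f$ over intervals of lengths $\tfrac{s_R-s_2}{2}$ and $\tfrac{s_2-s_L}{2}$, and then sign it via the monotonicity of $f$ (the paper does this with the mean value theorem where you bound the integrals directly against $f(b)$ — an equivalent step). Your explicit treatment of part~(A) and of the immediate active neighbours $L,R$ is a welcome tightening of details the paper leaves implicit.
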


\begin{proof}
(A) is obvious. 
	For (B), consider a set of active agents $M^{-}$ such that $\{s_{1},s_{3}\}\subseteq M^{-}$
	and $s_{2}\notin M^{-}$. Define $M^{+}=M^{-}\cup\{s_{2}\}$. 
	The population decision boundary
	points are the intermediate points between the different agents $\alpha=\left(s_{1}+s_{2}\right)/2$, $\gamma=\left(s_{2}+s_{3}\right)/2$ and $\beta=\left(s_{1}+s_{3}\right)/2$. The result of the decision mechanism is denoted as $g_{M^{-}}$,
	or correspondingly, $g_{M^{+}}$. We note that $\mathbf{mn}\left(S_{M^{+}}\right)-\mathbf{mn}\left(S_{M^{-}}\right)$ equals\\ $\left(s_{2}-s_{1}\right)\int_{\alpha}^{\beta} f(x)d(x)+\left(s_{2}-s_{3}\right)\int_{\beta}^{\gamma} f(x)d(x)
	$.

	By the intermediate value theorem, there exists a point $x_{1}\in[\alpha,\beta]$
	such that 
		$\int_{\alpha}^{\beta}f(x)d(x)  =  \left(\beta-\alpha \right)f(x_{1})
		 =  \frac{(s_{3}-s_{2})f(x_{1})}{2}$.

	Similarly, there is $x_{2}\in[\beta,\gamma]$ such that \\
	$
	\int_{\beta}^{\gamma}f(x)d(x)=\frac{(s_{2}-s_{1})f(x_{2})}{2}
	$.
	Therefore, 
	\begin{small}
	\begin{equation}
	\mathbf{mn}(\! S_{M^{+}}\!)-\mathbf{mn}(\! S_{M^{-}}\! )=\frac{\left(s_{2}\!-\!s_{1}\right)\left(s_{3}\!-\!s_{2}\right)\left(f(x_{1})\!-\!f(x_{2})\!\right)}{2}\label{eq:condition}
	\end{equation}
	\end{small}
	
	This expression is positive if $f\left(x_{1}\right)>f\left(x_{2}\right)$.
	If $f$ is monotonic decreasing in $[s_{1},s_{3}]$ this holds, while
	if $f$ is monotonic increasing we have $\mathbf{mn}\left(S_{M^{+}}\right)<\mathbf{mn}\left(S_{M^{-}}\right).$
	\if 0 
	A) We prove for $s_{n}$. For every set $M$ , $\mathbf{mn}\left(S_{M^{-}}\right)\leq\mathbf{mn}\left(S_{M^{+}}\right)\leq s_{n}$,
		where equality holds only if $M^{-}=\emptyset$ or there exists $s_{i}$
		such that $s_{i}\in M^{-}$ and $s_{i}=s_{n}$. In the former case
		it is beneficial for agent~$n$ to become active and the latter
		case happens with zero probability for continuous probability distribution.
		Therefore, it is beneficial for agent~$n$ to become active, as
		\[
		\left\Vert \mathbf{mn}\left(S_{M^{+}}\right)-s_{n}\right\Vert >\left\Vert \mathbf{mn}\left(S_{M^{-}}\right)-s_{n}\right\Vert 
		\]
	\fi
	 If $\mathbf{mn}\left(S_{M^{-}}\right)<s_{2}$ and $f$ is increasing, it is not beneficial for $s_{2}$ to become active. Likewise, if $\mathbf{mn}\left(S_{M^{-}}\right)>s_{2}$
	and $f$ is monotonic decreasing $s_{2}$ will not be active.
	Finally, if $f$ is constant, then $f(x_1)=f(x_2)$ and agent $s_2$ does not affect the result and will be inactive.
%
%
%
%
%
%
%
%
%
\end{proof}

Before considering general probability distributions, we apply the
previous lemma for the particular case of the uniform distribution.
We show that even when the voters are strategic, the result equilibrium
is the optimal configuration.
\begin{proposition}\label{prop:uni_mean_eq}
In the Proxy scenario, for the Uniform distribution and any set of agents $N$, there is a unique equilibrium of $\mn^{P+L}$ where $M=\{\argmin S_N,\argmax S_N\}$ (i.e., the two extreme agents). Further, the game is weakly acyclic, i.e. there is a sequence of best replies from any initial state to this equilibrium.
 \end{proposition}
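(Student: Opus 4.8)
The plan is to leverage the closed form for the proxy outcome under $U[-1,1]$ (w.l.o.g.\ by rescaling) obtained in the proof of Proposition~\ref{th:w_mean_var}. The crucial point is that the telescoping computation behind Eq.~\eqref{eq:w_avg} never uses the fact that \emph{all} of $N$ votes: run it with only the active set $M$, sorting its members as $s_{j_1}<\dots<s_{j_m}$ and keeping the same endpoint conventions, and one obtains $\mn^{P}(S_M)=\mn(S_M,\vec w_M)=\frac{s_{j_1}+s_{j_m}}{2}+\frac{s_{j_1}^2-s_{j_m}^2}{4}=:g(s_{j_1},s_{j_m})$. So, under the Uniform distribution, the proxy outcome depends \emph{only} on the leftmost and rightmost active positions. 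A one-line check gives $a\le g(a,b)\le b$ for $-1\le a\le b\le 1$, with $\partial_a g=\frac{1+a}{2}\ge 0$ and $\partial_b g=\frac{1-b}{2}\ge 0$, both strict on $(-1,1)$. Throughout I assume the sampled positions are distinct, which holds almost surely; index so that $s_1<\dots<s_n$, so that $\argmin S_N=1$ and $\argmax S_N=n$.

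Given this, the argument is short. First: in every equilibrium both $1$ and $n$ are active. By Lemma~\ref{lem:mean-strategic}(A) this is a dominant strategy; concretely, if $n\notin M$ then activating it turns the rightmost active position from some $b<s_n$ into $s_n$, moving $g(a,\cdot)$ strictly upward and hence (since $g\le s_n$) strictly toward agent~$n$'s ideal point --- a strict gain; symmetrically for agent~$1$. Second: once $\{1,n\}\subseteq M$, the outcome is pinned to $g(s_1,s_n)$ no matter which of the agents $1<i<n$ are active, so every such agent is non-pivotal and, by lazy-bias, strictly prefers to be inactive. Combining the two facts, $M=\{1,n\}$ is an equilibrium and is the only one: any equilibrium must contain $\{1,n\}$ and cannot contain anything else.

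For weak acyclicity I would exhibit an explicit best-reply path from an arbitrary state $M_0$: (i) if $1\notin M_0$, let agent~$1$ go active (a strict gain, as it pulls the outcome left toward $s_1$); (ii) if $n$ is not active, let agent~$n$ go active (a strict gain); (iii) with $\{1,n\}$ now active the outcome is frozen at $g(s_1,s_n)$, so let each currently-active agent $1<i<n$ go inactive, one at a time --- each such move is a strict gain by lazy-bias. This path has at most $n$ steps, so the game is weakly acyclic.

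I do not expect a real obstacle here: the nontrivial content (the telescoping identity and part~(A) of Lemma~\ref{lem:mean-strategic}) is already in place. The points needing care are that the identity \eqref{eq:w_avg} localizes to arbitrary active sets $M$ (not just $N$), that the monotonicity of $g$ is \emph{strict} on $(-1,1)$ so that the deviations in the acyclicity argument are genuine improvements rather than ties, and the harmless genericity caveat that if several voters share an extreme position the equilibrium is unique only up to relabeling those voters.
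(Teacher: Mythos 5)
Your proof is correct and reaches the paper's conclusion by a mildly different route. The paper's proof is two sentences: it invokes Lemma~\ref{lem:mean-strategic}(A) for the claim that the two extreme agents are active in any equilibrium (the lemma itself dismisses part~(A) as ``obvious''), and the constant-$f$ case of Lemma~\ref{lem:mean-strategic}(B) --- the vanishing of the increment in Eq.~\eqref{eq:condition} --- for the claim that every middle agent is non-pivotal and hence quits under lazy bias. You instead observe that the telescoping computation behind Eq.~\eqref{eq:w_avg} applies verbatim to any active set $M$, so that under the Uniform distribution the proxy outcome is the closed-form function $g(\min S_M,\max S_M)=\frac{a+b}{2}+\frac{a^2-b^2}{4}$ of the two extreme active positions alone. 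This is the same underlying cancellation (the constant-$f$ case of Eq.~\eqref{eq:condition} is precisely the statement that inserting a middle agent leaves the telescoped sum unchanged), but your packaging makes explicit three things the paper glosses over: a verified \emph{strict} incentive for the extreme agents to activate (via $\partial_b g=\frac{1-b}{2}>0$ together with $a\le g(a,b)\le b$), an explicit best-reply path of length at most $n$ establishing the weak-acyclicity claim (which the paper asserts in the statement but does not argue in the proof), and the identity $\mn^{P+L}(S_N)=\mn^{P}(S_N)$ that the subsequent corollary needs. The only loose end is the degenerate initial state $M_0=\emptyset$, where the outcome is undefined and your path's first step needs a word of justification; this is cosmetic, and the paper does not treat it either.
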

\begin{proof}
	Lemma~\ref{lem:mean-strategic}(A) says it is beneficial
	that the two most extreme agents to be active. Due to Part~(B), all other agents will quit. 
\end{proof}
Our last result for uniform distributions shows that strategic behavior, despite lowering the number of active agents, leads to a more accurate outcome than in the non-strategic case. In fact, it can be shown that \emph{no other estimator} outperforms $\mn^{P+L}$ for the Uniform distribution.

\begin{corollary}
For the Mean mechanism, for any sample $S_N$, under the unique equilibrium of $\mn^{P+L}$ for Uniform $f$, 
$\mn^{P+L}(S_N) = \mn^P(S_N)$.  In particular, $\loss^{P+L}(n)= \loss^{P}(n)$.
\end{corollary}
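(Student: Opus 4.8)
The plan is to combine the equilibrium characterization of Proposition~\ref{prop:uni_mean_eq} with the telescopic computation already carried out in the proof of Proposition~\ref{th:w_mean_var}. By Proposition~\ref{prop:uni_mean_eq}, the unique equilibrium of $\mn^{P+L}$ under $f=U[-1,1]$ has active set $M=\{\argmin S_N,\argmax S_N\}$; write $s_1,s_n$ for the smallest and largest positions in $S_N$. Then $\mn^{P+L}(S_N)=\mn(S_{\{1,n\}},\vec w_{\{1,n\}})$, where the two weights are obtained by splitting the mass of $f$ at the Voronoi boundary $\frac{s_1+s_n}{2}$.

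First I would compute these two weights explicitly. Using $F(x)=\frac{x+1}{2}$ for the Uniform CDF on $[-1,1]$, agent~$1$ captures all voters below $\frac{s_1+s_n}{2}$, so $w_1=F\!\left(\frac{s_1+s_n}{2}\right)=\frac{s_1+s_n+2}{4}$ and $w_n=1-w_1=\frac{2-s_1-s_n}{4}$. Substituting into the weighted mean and expanding, $\mn(S_{\{1,n\}},\vec w)=w_1 s_1+w_n s_n=\frac{s_1+s_n}{2}+\frac{s_1^2-s_n^2}{4}$ once the $s_1 s_n$ cross-terms cancel.

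Next I would recall that the \emph{full} proxy-weighted mean with all $n$ agents active also collapses: by the telescopic cancellation in Eq.~\eqref{eq:w_avg} (proof of Proposition~\ref{th:w_mean_var}), $\mn^P(S_N)=\mn(S_N,\vec w_N)=\frac{s_1+s_n}{2}+\frac{s_1^2-s_n^2}{4}$, i.e.\ it depends only on the two extreme positions. Comparing the two closed forms gives $\mn^{P+L}(S_N)=\mn^P(S_N)$ for every sample $S_N$. Since the two estimators agree pointwise on every realization, their expected squared errors agree as well, yielding $\loss^{P+L}(n)=\loss^{P}(n)$.

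There is essentially no hard step here: the real content is the already-established fact that, for $U[-1,1]$, $\mn^P$ sees only $s_1$ and $s_n$, so deleting the interior agents in equilibrium changes nothing. The only point requiring care is verifying that the two-agent mass split at the midpoint reproduces exactly the same closed form as the $n$-agent telescoping sum; this is the short expansion above, and it is the one place where a sign or normalization slip could occur.
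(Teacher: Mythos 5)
Your proposal is correct and follows essentially the same route as the paper's own proof: invoke the equilibrium characterization to reduce to the two extreme agents, compute the two proxy weights from the midpoint split, and observe that the resulting closed form $\frac{s_1+s_n}{2}+\frac{s_1^2-s_n^2}{4}$ coincides with the telescoped expression for $\mn(S_N,\vec w_N)$ in Eq.~\eqref{eq:w_avg}. The algebra checks out (including the weight normalization), so nothing further is needed.
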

  
\cut{
\begin{proof}
w.l.o.g. $f=U[-1,1]$. 
For any sample $S_N$, let $S_M=\{s_1,s_n\}$ contain the two extreme samples. Let $\vec w_M= (w^*_1,w^*_n)$ denote the weights of these samples under proxy voting, when there are no other agents. 
We have that 
\begin{align*}
	\mn^{P+L}(S_N)&= \mn(S_M,\vec w_M)= \frac12 (s_1 w^*_1 + s_n w^*_n) \\
	&= \frac{1}{2}\ensuremath{\left(s_{1}\ensuremath{\left(\frac{s_{1}+s_{n}}{2}+1\right)}+s_{n}\ensuremath{\left(1-\frac{s_{1}+s_{n}}{2}\right)}\right)}
	=  \frac{s_{1}+s_{n}}{2}+\ensuremath{\frac{s_{1}^{2}-s_{n}^{2}}{4}}\\
	&= \mn(S_N,\vec w_N) = \mn^{P}(S_N), \tag{by Eq.~\eqref{eq:w_avg}}
\end{align*}
as required.
\end{proof}
}

\begin{figure}

	\begin{centering}
		\includegraphics[width=1\columnwidth]{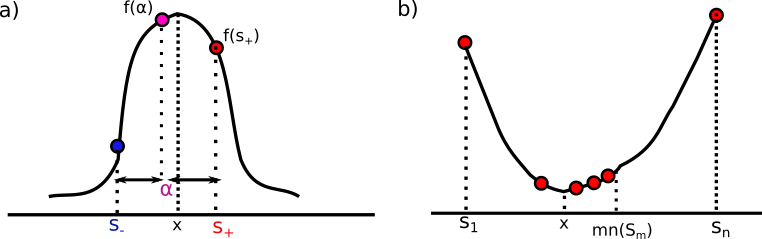}\protect\caption{\label{fig:single-peak-and-dip}a) An equitable partition of a a single 
			peaked distribution. b) A feasible equilibrium in a single dip scenario.
			Note that in both cases the distribution is not necessarily symmetric.}
		
		\par\end{centering}
	
\end{figure}

\medskip
We now turn to analyze more general distributions, and we first focus on the single peak case. Denote the
peak location as $x$, the smallest agent in $[x,\infty]$ as $s_{+}$
and the largest agent in $[-\infty,x]$ as $s_{-}$. Set $\alpha=\left(s_{+}+s_{-}\right)/2$ as the intermediate point between $s_{+}$ and $s_{-}$. Assume, w.l.o.g, that $f\left(s_{-}\right)\geq f\left(s_{+}\right)$. We call a given set of agents $M$ an \emph{equitable partition} if $\mn^P(S_M)\in[s_-,s_+]$  and $f(A)\geq f(s_{+})$ (Fig. \ref{fig:single-peak-and-dip}(a)).

\begin{proposition}
	Consider a single peaked distribution $f$ and a profile $S_{N}$. If $S_{N}$ is an equitable partition, then there is an equilibrium of $\mn^{P+L}$ where all agents
	are active $M=N$. In particular, the error is the same as in $\mn^{P}$.
	\end{proposition}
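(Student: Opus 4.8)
The plan is to verify directly that, starting from $M=N$, no agent has a profitable deviation to \emph{inactive} — since $M\subseteq N$ always, when $M=N$ there is nobody outside $N$ who could switch to active, so this is the only thing to check for $M=N$ to be an equilibrium of $\mn^{P+L}$. Index the sampled agents $s_1<\dots<s_n$ and write $A:=\mn^P(S_N)=\mn(S_N,\vec w_N)$; the equitable‑partition hypothesis gives $A\in[s_-,s_+]$. By Lemma~\ref{lem:mean-strategic}(A) the extreme agents $s_1,s_n$ have a dominant strategy to stay active, so it remains to take an arbitrary non‑extreme $s_i$ and show that deactivating it moves the weighted mean away from $s_i$ — strictly, so that lazy bias does not eject $s_i$.

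I would split the non‑extreme agents by the location of the peak $x\in[s_-,s_+]$. If $s_i<s_-$, then the three midpoints governing $s_i$'s Voronoi cell, $\tfrac{s_{i-1}+s_i}2$, $\tfrac{s_i+s_{i+1}}2$, $\tfrac{s_{i-1}+s_{i+1}}2$, all lie in the increasing region $(-\infty,x]$ (because $s_{i+1}\le s_-\le x$), so Lemma~\ref{lem:mean-strategic}(B) in its increasing form, applied with the full active set, says $s_i$ prefers to stay active whenever $A>s_i$; and indeed $A\ge s_->s_i$. Symmetrically, if $s_i>s_+$ the relevant midpoints lie in the decreasing region $[x,\infty)$ and Lemma~\ref{lem:mean-strategic}(B) together with $A\le s_+<s_i$ keeps $s_i$ active. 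The only agents not covered are $s_-$ and $s_+$, whose cells straddle $x$, so the lemma does not apply to them and they need a hands‑on argument.

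For $s_-$ (assume it is non‑extreme, otherwise it is already covered by (A)), let $p$ be its left neighbour; deactivating $s_-$ merges its cell into those of $p$ and $s_+$, split at $\tfrac{p+s_+}2$. Rerunning the mass‑transfer computation from the proof of Lemma~\ref{lem:mean-strategic} and invoking the intermediate value theorem gives
\[
\mn(S_N,\vec w_N)-\mn\bigl(S_{N\setminus\{s_-\}},\vec w_{N\setminus\{s_-\}}\bigr)=\tfrac{(s_--p)(s_+-s_-)}{2}\bigl(f(y_1)-f(y_2)\bigr)
\]
for some $y_1\le y_2$ in $[\tfrac{p+s_-}2,\alpha]$. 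When this difference is $\le 0$ we get $\mn(S_{N\setminus\{s_-\}})\ge A\ge s_-$, so $A$ is at least as close to $s_-$ and $s_-$ does not leave; the remaining case (the difference is positive because $f$ has already started decreasing between $y_1$ and $y_2$) is exactly where the rest of the equitable‑partition definition — $A\in[s_-,s_+]$, the assumed lower bound on $f$ around the centre, and the w.l.o.g.\ choice $f(s_-)\ge f(s_+)$ — enters, to guarantee that the mean is then pushed past $s_-$ far enough that $s_-$ still (weakly) prefers to stay. The agent $s_+$ is treated identically using $A\le s_+$ and $f(s_-)\ge f(s_+)$. Once every agent is shown to stay, $M=N$ is an equilibrium, and since all agents are active the outcome is $\mn^{P+L}(S_N)=\mn(S_N,\vec w_N)=\mn^P(S_N)$, so $\|\mn^{P+L}(S_N)-x^*\|=\|\mn^P(S_N)-x^*\|$.

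The main obstacle is the straddling pair $s_-,s_+$: they sit on the non‑monotone part of $f$, so leaving can drag the weighted mean in either direction, and pinning down that neither benefits is precisely what forces the ``equitable partition'' hypothesis (and the normalization $f(s_-)\ge f(s_+)$) into the statement — everything else reduces mechanically to Lemma~\ref{lem:mean-strategic} plus $\mn^P(S_N)\in[s_-,s_+]$. A secondary point worth tracking is strictness, since lazy bias demands a strict preference for staying active; this holds generically when $f$ is strictly single‑peaked on the range spanned by the agents, the boundary configurations having probability zero under i.i.d.\ sampling.
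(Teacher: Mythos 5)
Your overall architecture is the same as the paper's: Lemma~\ref{lem:mean-strategic}(A) for the two extremes, Lemma~\ref{lem:mean-strategic}(B) for every agent lying strictly inside one of the monotone regions (using $\mn^P(S_N)\in[s_-,s_+]$ to check the lemma's hypothesis), and a direct displacement computation for the straddling pair $s_-,s_+$; you also correctly note that with $M=N$ only deactivation deviations need to be excluded. The gap is that the one step where the equitable-partition condition $f(A)\ge f(s_+)$ must actually do work --- ruling out that $s_-$ or $s_+$ gains by quitting --- is asserted rather than carried out. For $s_+$ you say only that it is ``treated identically,'' and for $s_-$ your case split is not an argument: in the case where the displacement is positive you claim the hypothesis guarantees the mean is ``pushed past $s_-$ far enough,'' but the definition of an equitable partition supplies no such quantitative overshoot bound, and that is not the role the hypothesis plays.

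What the paper's proof actually does for $s_+$ is pin down the \emph{sign} of the displacement in Eq.~\eqref{eq:condition} for the triple whose middle element is $s_+$: from $A\in[s_-,s_+]$, $f(A)\ge f(s_+)$, and single-peakedness one gets $f(y)\ge f(s_+)$ for every $y\in[A,s_+]$ ($f$ increases from $A$ up to the peak starting above $f(s_+)$, then decreases back down to $s_+$), hence $f(x_1)>f(s_+)$, while $f(x_2)<f(s_+)$ since $x_2$ sits in the decreasing tail beyond $s_+$. The displacement is therefore strictly positive, so removing $s_+$ moves the weighted mean strictly to the left of $A\le s_+$, i.e.\ strictly away from $s_+$; this strictness is also what protects $s_+$ from lazy bias, with no genericity assumption needed. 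The mirrored sign argument (using the normalization $f(s_-)\ge f(s_+)$) is what handles $s_-$. Without this computation your proof leaves open exactly the two agents you yourself single out as the main obstacle, so the proposition is not established.
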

\begin{proof}
	Consider the set $M=N$. Following Eq.~\ref{eq:condition}, 
	$s_{+}$ will not quit from the active set if 
	$\left(s_{2}-s_{1}\right)\left(s_{3}-s_{2}\right)\left(f(x_{1})-f(x_{2})\right)>0$.

	We shall now show that $f(x_{1})>f(s_{+})$. Assume $A<x$. As $f(A)\geq f(s_{+})$,
	for every $y\in[A,x]$ we have $f(y)>f(A)\geq f(s_{+})$ as $f$ is
	increasing in $[A,x]$. Likewise, $f(y)>f(s_{+})$ as $f$ is decreasing
	in $[x,s_{+}]$ therefore $f(x_{1})$, the mean value of $f$ in $[A,s_{+}]$
	satisfies $f(x_{1})>f(s_{+})$. Now, $f(s_{+})>f(x_{2})$ as $f$
	is monotonic decreasing in $[x,\infty]$. Therefore, the former expression
	is positive, and $s_{+}$ will stay in the equilibrium set. 
	
	Now, Lemma \ref{lem:mean-strategic}(A) shows that the most extreme
	agents $s_{1},s_{n}$ will be active, while Lemma\ref{lem:mean-strategic}
	(B) shows every agent between $[s_{+},s_{n}]$ and $[s_{1},s_{-}]$
	are also active. Namely, all the agents are active.
\end{proof}

This shows that proxy voting may achieve maximal participation in a single peak setup. Next, we address the single dip setting. 
\begin{proposition}
	Consider a single dipped distribution where the dip location is $x$. Consider
	any equilibrium $M\subseteq N$, and assume w.l.o.g that
	$\mathbf{mn}\left(S_{M}\right)\leq x$. Then, $M$ contains at
	most two agents in $[\min\left(S_{N}\right),\mathbf{mn}\left(S_{M}\right)]$
	and at most two agents in $[\mathbf{mn}\left(S_{M}\right),\max\left(S_{N}\right)]$.\end{proposition}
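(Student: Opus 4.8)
The plan is to prove the two counts separately, each by assuming three agents of $M$ fall in the relevant interval and contradicting the equilibrium conditions. Write $\mu:=\mn^{P+L}(S_N)=\mn(S_M,\vec w_M)$ for the outcome. Since $f$ is single-dipped with dip $x$ and $\mu\le x$, the density is strictly decreasing on $\calX\cap(-\infty,x]$ and strictly increasing on $\calX\cap[x,\infty)$; also, by Lemma~\ref{lem:mean-strategic}(A), $\min(S_N)$ and $\max(S_N)$ are active, so they are the extreme points of $S_M$.

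For the first bound, suppose $M$ contains $\ge 3$ agents in $[\min(S_N),\mu]$; taking the three smallest agents of $M$ gives $t_1<t_2<t_3$ that are consecutive in $M$ with $t_1=\min(S_N)$ and $t_3\le\mu\le x$, so $f$ is strictly decreasing on $[t_1,t_3]$. Applying Eq.~\eqref{eq:condition} to $(t_1,t_2,t_3)$ with $M^-=M\setminus\{t_2\}$ yields
\[
\delta:=\mu-\mn(S_{M\setminus\{t_2\}})=\frac{(t_2-t_1)(t_3-t_2)}{2}\bigl(f(\xi_1)-f(\xi_2)\bigr)>0,
\]
with $\xi_1\in[\tfrac{t_1+t_2}{2},\tfrac{t_1+t_3}{2}]$, $\xi_2\in[\tfrac{t_1+t_3}{2},\tfrac{t_2+t_3}{2}]$ (so $f(\xi_1)>f(\xi_2)$). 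Equilibrium forces $t_2$ to strictly prefer being active, i.e.\ $|\mu-t_2|<|\mu-\delta-t_2|$; since $t_2\le t_3\le\mu$ and $\delta>0$ this is impossible unless $\mu-\delta<t_2$, and then the inequality rearranges to $\delta>2(\mu-t_2)\ge 2(t_3-t_2)$ (using $\mu\ge t_3$). Dividing by $t_3-t_2$ gives $(t_2-t_1)\bigl(f(\xi_1)-f(\xi_2)\bigr)>4$, hence $(t_2-t_1)\,f(\tfrac{t_1+t_2}{2})>4$ (as $\xi_1\ge\tfrac{t_1+t_2}{2}$ and $f$ decreases there). But monotonicity on $[t_1,\tfrac{t_1+t_2}{2}]\subseteq\calX\cap(-\infty,x]$ gives $\int_{t_1}^{(t_1+t_2)/2}f(z)\,\mathrm{d}z\ge\tfrac{t_2-t_1}{2}f(\tfrac{t_1+t_2}{2})$, so $(t_2-t_1)f(\tfrac{t_1+t_2}{2})\le 2\int_{t_1}^{(t_1+t_2)/2}f\le 2$ --- a contradiction.

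For the second bound I split the agents of $M$ in $[\mu,\max(S_N)]$ into those with position $\ge x$ and those in $(\mu,x)$. The first group admits the mirror-image argument: if $p_1<p_2<p_3$ are the three largest agents of $M$ they are consecutive, $\ge x$, with $p_3=\max(S_N)$, so $f$ is strictly increasing on $[p_1,p_3]$; now removing $p_2$ strictly \emph{increases} the weighted mean by $\delta'=\tfrac{(p_2-p_1)(p_3-p_2)}{2}(f(\eta_2)-f(\eta_1))>0$, equilibrium forces $\delta'>2(p_2-\mu)\ge 2(p_2-p_1)$ (using $\mu\le x\le p_1$), hence $(p_3-p_2)f(\tfrac{p_2+p_3}{2})>4$, contradicted by $\int_{(p_2+p_3)/2}^{p_3}f\ge\tfrac{p_3-p_2}{2}f(\tfrac{p_2+p_3}{2})$. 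Thus at most two agents of $M$ have position $\ge x$. It then remains to bound the agents of $M$ in the dip region $(\mu,x)$ and to add the two counts so the total over $[\mu,\max(S_N)]$ is at most two. Here the local best-response argument fails --- on the decreasing slope an agent just above $\mu$ always gains from being active, since adding it nudges the weighted mean upward toward it --- so the bound must use the self-consistency $\mn(S_M,\vec w_M)=\mu\le x$: one looks at the rightmost agent $w$ of $M$ in $(\mu,x)$ and its successor $u\in S_M$ with $u\ge x$; removing $w$ sends part of $w$'s Voronoi mass across the dip onto the far-right agent $u$, which pulls the outcome back up toward $w$ and destabilizes $w$ unless $w,u$ and the total weight above $\mu$ obey a rigid relation, and substituting that relation into ``$\mu$ equals the weighted mean, and $\mu\le x$'' caps the number of dip-region agents a profile can sustain.

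I expect this last point --- controlling agents clustered on the decreasing slope just short of the dip --- to be the crux. The density there may be arbitrarily small, so these agents move the outcome negligibly and are automatically stable under unilateral deviations; the count on them therefore cannot come from incentives alone and has to be squeezed out of the global constraint that $\mu$ is the weighted mean lying on the prescribed side of $x$, together with the already-proven cap of two on each monotone side.
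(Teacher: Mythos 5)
Your handling of the two monotone regions is correct and in fact tighter than the paper's. The paper's proof simply invokes Lemma~\ref{lem:mean-strategic}(B) on three consecutive active agents to conclude that the middle one quits, but the two conditions in that lemma ($\mn(S_{M\cup\{s_2\}})<s_2$, resp.\ $\mn(S_{M\setminus\{s_2\}})\ge s_2$) are not exhaustive, and the paper never addresses the case where removing the middle agent moves the weighted mean past that agent's position. Your quantitative step --- showing that in that case the incentive inequality forces $(t_2-t_1)f(\tfrac{t_1+t_2}{2})>4$, which is impossible because the density integrates to at most $1$ --- closes exactly that hole, and the mirror argument for the three largest agents (all of which lie in $[x,\max(S_N)]$ when at least three active agents do) is equally sound. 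So for ``at most two active agents in $[\min(S_N),\mn(S_M)]$'' and ``at most two active agents in $[x,\max(S_N)]$'' your proof is complete.

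The gap is the dip region $(\mn(S_M),x)$, and you have diagnosed it correctly: agents there sit on the decreasing slope \emph{above} the outcome, so by Eq.~\eqref{eq:condition} activating one pulls the weighted mean toward it, local incentives keep it active, and no unilateral-deviation argument can cap their number. You should know that the paper does not close this gap either: its proof establishes only the two monotone bounds and then explicitly observes that if any agent in $[\mn(S_M),x]$ is active, then \emph{all} agents between $\mn(S_M)$ and the maximal such active agent are active --- the opposite of a cap --- and the discussion following the proposition (parties emerging ``between the dip location and the decision rule'') concedes the same. In other words, the proposition as printed, with the interval $[\mn(S_M),\max(S_N)]$, is not what its own proof establishes and appears false as stated; the provable claim replaces that interval by $[x,\max(S_N)]$. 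Your proposed repair via the self-consistency constraint $\mn(S_M,\vec w_M)=\mu\le x$ is not needed for that corrected statement, and I would not expect it to rescue the printed one, since the density near the dip can be made arbitrarily small while still supporting many stably active agents there.
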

	\cut{
\begin{proof}
	Lemma \ref{lem:mean-strategic}(A) shows that the two most extreme
	agents $s_{1},s_{n}$ are always active. Denote the dip
	location as $x$. Consider some active agents set $M$, and assume $\mathbf{mn}\left(S_{M}\right)\leq x$.
	Lemma \ref{lem:mean-strategic}(B) shows that there can not be more
	than two agent in $[x,s_{n}]$ and $[s_{1},\mathbf{mn}\left(S_{M}\right)]$.
	Consider an equilibrium set that contain active agents in $\mathcal{A=}[\mathbf{mn}\left(S_{M}\right),x]$.
	Denote the maximal active agent in $\mathcal{A}$ as $y$. Then Lemma  \ref{lem:mean-strategic}(B) indicates that all agents in $[\mathbf{mn}\left(S_{M}\right),y]$
	are active, while there is only one active agent in $[s_{1},\mathbf{mn}\left(S_{M}\right)]$,
	which is $s_{1}.$
	
	If there are no active agent in $\mathcal{A}$, then Lemma \ref{lem:mean-strategic}(B) show that are at most two agents in $[s_{1},\mathbf{mn}\left(S_{M}\right)]$
	and in $[s,s_{n}]$.
\end{proof}
}
We see the possible emergence of four active agents, or parties, at
the center-right, center-left, extreme right and extreme left. If
the  distribution is heavily skewed, we expect some parties
to emerge between the dip location and the decision rule, balancing
the result.

%
%
%
%
\section{Binary Issues}
\label{sec:binary}
In this section $\cal X = \{0,1\}^k$ and $\mj(S)$ outputs a binary vector according to the majority on each issue.
In the most general case, $f$ can be an arbitrary distribution over $\{0,1\}^k$. However, we assume that issues are conditionally independent in the following way: first a number $P$ is drawn from a distribution $h$ over $[0,1]$, and then the position on each issue is `1' w.p. $P$. That is, the position of a voter on all issues is $(s^{(1)},\ldots,s^{(k)})$, where $s^{(j)}$ are random variables sampled i.i.d from a Bernoulli distribution $Ber(P)$, and $P$ is a random variable sampled from $h$. Since $h$ induces $f$ we sometimes use them interchangeably. 

\newpar{Evaluation} 
W.l.o.g. denote the majority opinion on each issue as 0, meaning that $x^* = \mj(f) = (0,0,\ldots,0)$.  The expected rate of `1' opinions is $\mu\equiv \mathbb{E}_{P\sim h}[P]< 0.5$.  
One interpretation of this model is that $\vec{0}$ is the ground truth, and $P_i$ is the probability that agent $i$ is wrong at any issue. Under this interpretation $1{-}\mu$ is the signal strength that the population has on the truth.  In the lack of ground truth,  the majority opinion is considered optimal.  Here $P_i$ is the probability that agent~$i$ disagrees with the majority at each issue.
The error of a given outcome $z\in \{0,1\}^k$ is then  $\|z-x^*\|=\sum_{j=1}^k z^{(j)}$ (coincides with the Hamming distance between $z$ and $x^*$). The loss is the expected error over samples as before. 

We argue that when society has \emph{limited information} (small sample size $n$ and high mistakes probability $\mu$), then scenario $P$ does better than scenario $B$, i.e. $\loss^{P}(n) < \loss^{B}(n)$.

\newsubsec{Random participation}
Suppose that each agent is wrong w.p. \emph{exactly} $\mu<0.5$, i.e. $t_i\sim Ber(\mu)$ is the opinion of agent~$i$ on a particular issue. Then the probability that the majority is wrong on this issue is $\Pr(\sum_i t_i  > \frac{n}{2})$, as stated by the \emph{Condorcet Jury Theorem}. In our case, $t_i \sim Ber(P_i)$, where $P_i$ differs among agents, and this case of independent heterogeneous variables was covered in \cite{grofman1983thirteen}, which showed: 
\labeq{grofman}
{
\Pr(\sum_i t_i  > \frac{n}{2}) = \Pr(Z_{\mu,n} > \frac{n}{2}),
}
where $Z_{\mu,n}\sim Bin(\mu,n)$ and $\mu=\frac{1}{n}\sum_i P_i$. Since the loss is additive along issues, $\loss^B(n) =  k\Pr(Z_{\mu,n} > \frac{n}{2})$.
%

We now turn to analyze the Proxy scenario. 
Assume w.l.o.g. that $P_1,P_2,\ldots,P_n$ are sorted in increasing order.
As $k$ increases, $P_i$ provides a good prediction of how many 1's and 0's will be in $s_i$. This  enables us to predict how inactive agents will select their proxies: an agent with parameter $P_i<0.5$ will almost always select agent~$1$ and an agent with $P_i>0.5$ will select agent~$n$ w.h.p. 
\begin{lemma} \label{lemma:dictator} For every position $z<0.5$, $\Pr(\exists j\in N \text{ s.t. } \|s_j-z\| < \|s_{1}-z\| ) < n\cdot e^{-bk}$.  for some constant $b$. The same holds for $z>0.5$ and $s_n$. 	
\end{lemma}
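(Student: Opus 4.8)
The plan is to reduce the geometric statement to a concentration estimate: the Hamming distance $\|s_j-z\|$ is a sum of $k$ independent bounded coordinates, its mean is monotone in the parameter $P_j$ for a ``left'' target $z$, and these sums concentrate tightly enough that the ordering of the means is preserved with exponentially high probability. First I would apply a union bound over $j\in N$; the $j=1$ term contributes $0$, so
\[
\Pr\big(\exists\, j\in N:\ \|s_j-z\|<\|s_1-z\|\big)\ \le\ \sum_{j=2}^{n}\Pr\big(\|s_j-z\|<\|s_1-z\|\big),
\]
and it suffices to bound each of the $n-1$ summands by $e^{-bk}$ (up to adjusting $b$). Since all $\ell_\rho$ norms on $\{0,1\}^k$ are increasing functions of the Hamming distance, only the Hamming distances matter here.

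Next, for each agent $j$ I would write $\|s_j-z\|=\sum_{\ell=1}^{k}\mathbf{1}[s_j^{(\ell)}\ne z^{(\ell)}]$, a sum of independent $\{0,1\}$ variables. If $z$ has $m$ ones -- so that ``$z<0.5$'' is read as $\bar z:=m/k<\tfrac12$ -- then agent $j$ disagrees with $z$ with probability $P_j$ on each of the $k-m$ zero-coordinates and with probability $1-P_j$ on each of the $m$ one-coordinates, giving $\mathbb E\|s_j-z\|=m+(k-2m)P_j$. Because $k-2m>0$ this is increasing in $P_j$, so by the sorting $P_1\le\cdots\le P_n$ we get $\mathbb E\|s_1-z\|\le\mathbb E\|s_j-z\|$ with gap $\Delta_j:=(k-2m)(P_j-P_1)$, and $\Delta_j\ge\Delta_2$ for all $j\ge 2$.

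I would then finish with Hoeffding's inequality: applying it to $\|s_1-z\|$ and to $\|s_j-z\|$ with deviation $\Delta_j/2$ shows that, outside an event of probability at most $2\exp(-\Delta_j^2/(2k))$, both distances fall on the correct side of the midpoint $\tfrac12(\mathbb E\|s_1-z\|+\mathbb E\|s_j-z\|)$, which forces $\|s_1-z\|\le\|s_j-z\|$ and hence rules out the event in question. Since $\Delta_j^2/(2k)=\tfrac{k}{2}(1-2\bar z)^2(P_j-P_1)^2\ge \tfrac{k}{2}(1-2\bar z)^2(P_2-P_1)^2$, summing over $j$ yields the bound $n\cdot e^{-bk}$ with $b$ proportional to $(1-2\bar z)^2(P_2-P_1)^2$. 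The symmetric claim for $z>0.5$ and $s_n$ follows by exchanging the labels $0$ and $1$ (equivalently, replacing each $P_i$ by $1-P_i$ and each coordinate $z^{(\ell)}$ by $1-z^{(\ell)}$).

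The one delicate point -- the main obstacle -- is ensuring that $b$ is a genuine constant, i.e.\ bounded away from $0$ uniformly in $k$. This requires (i) $z$ to be bounded away from density $\tfrac12$, which is automatic in the way the lemma is applied, since there $z$ is the highly concentrated position of an inactive voter whose parameter is a fixed $P<\tfrac12$, so $\bar z\to P$; and (ii) a separation $P_2-P_1\ge\delta>0$ between the smallest parameter and the rest, which holds almost surely for fixed $n$ when $h$ is non-atomic, and in general means $b$ should be read as depending on this separation. Everything else is a routine two-sided concentration bound plus the union bound.
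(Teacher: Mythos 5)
Your proof is correct and follows the same basic strategy as the paper's: a union bound over $j$ followed by a concentration argument showing that the Hamming distance to $s_1$ has the smallest mean and that the $k$-coordinate sums concentrate exponentially. The differences are in execution, and they mostly favor your version. The paper treats $z$ as the random position of an inactive voter with parameter $P_i<\tfrac12$, writes $\|s_j-z\|\sim\mathrm{Bin}(k,q_j)$ with $q_j=P_i(1-P_j)+(1-P_i)P_j$, and finishes with a normal approximation to the difference of two binomials, yielding $\Phi(-a\sqrt{k})<e^{-bk}$; this silently treats $\|s_1-z\|$ and $\|s_j-z\|$ as independent even though both depend on $z$, and leans on an asymptotic approximation. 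You instead condition on $z$ (after which the two distances genuinely are independent) and use Hoeffding with the midpoint trick, which gives a clean non-asymptotic bound. You are also more explicit than the paper about what the constant $b$ actually depends on: both arguments degenerate when $P_2-P_1\to 0$ or when $z$ has density approaching $\tfrac12$ (in the paper's parametrization, when $q_j-q_1=(P_j-P_1)(1-2P_i)\to 0$), and your remark that the gap holds almost surely for non-atomic $h$ and that $\bar z$ concentrates near $P_i<\tfrac12$ is exactly the right way to close that loophole. The only cosmetic slack is that your union bound gives $2(n-1)e^{-bk}$ rather than $ne^{-bk}$, which is absorbed by adjusting $b$; the paper's own bound has the same looseness.
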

\begin{proof}
	Note that $P_{1} < P_j$ for all $j>1$. In addition, we denote a topic disagreement indicator $I_{i,j}^{(l)} =\ind{s_i^{(l)} \ne s_j^{(l)}}$.  For each agent $i$ with $P_i <0.5, \forall j > 1$, 
	\begin{align*}
	&\Pr(\|s_i-s_j\| < \|s_i-s_{[1]}\|) = \Pr(\sum_{l=1}^{k}I_{i,j}^{(l)} < \sum_{l=1}^{k}I_{i,1}^{(l)} ) \\
	&= \Pr(\sum_{l=1}^{k}I_{i,j}^{(l)} - \sum_{l=1}^{k}I_{i,1}^{(l)} < 0 ) \\	
	\end{align*}
	define
	\begin{align*}
	&q_1 = P_i(1-P_{1})+(1-P_i)P_{1}\\
	&q_2 = P_i(1-P_j)+(1-P_i)P_j  
	\end{align*}
	
	Since $P_i<0.5, P_{1} > P_j \Rightarrow q_1 > q_2$
	\begin{align*}
	&X_1 = \sum_{l=1}^{k}I_{i,1}^{(l)} \sim Binomial(k,q_1) 	\\
	&X_2 = \sum_{l=1}^{k}I_{i,j}^{(l)}   \sim Binomial(k,q_2)  \\
	&\Pr(\sum_{l=1}^{k}I_{i,1}^{(l)}< \sum_{l=1}^{k}I_{i,j}^{(l)} ) = \Pr(X_1 - X_2 <0 )\\
	\end{align*}
	Since $k \to \infty $ and $q_1,q_2$ are constants, a normal approximation to binomial distribution will be sufficiently accurate for our purpose. 
	\begin{align*}
	&X_1 \approx Z_1 \sim N(kq_1,kq_1(1-q_1)) \\
	&X_2 \approx Z_2 \sim N(kq_2,kq_2(1-q_2)) \\	 
	& (Z_1 - Z_2) \sim N(k(q_1-q_2),k(q_1(1-q_1)+q_2(1-q_2))) \\
	& \Pr(X_1 -X_2 < 0 ) \approx \Pr(Z_1 -Z_2 < 0 ) \\
	&= \Phi(\dfrac{0 - k(q_1-q_2)}{\sqrt{k(q_1(1-q_1)+q_2(1-q_2))}}) \\
	&=\Phi(\dfrac{\sqrt{k}(q_1-q_2)}{q_1(1-q_1)+q_2(1-q_2)})= \Phi(-a\cdot \sqrt k)  \\
	\end{align*}
	
	for some positive constant $a$. Note that for $x<-1$, $\Phi(x) < O(e^{-\frac{x^2}{2}})$, thus $\Pr(X_1>X_2) < e^{-bk}$ for some constant $b>0$. By the union bound, $\Pr\left(\exists j\in M \text{ s.t. } \|s_j-z\| < \|s_{1}-z\| \right) \leq (m-1)Pr(X_1>X_2) < me^{-bk}$.
\end{proof}

This means that when there are many issues, all voters with $z<0.5$ will cast their votes to agent~1, thus $w_1=\Pr_{z\sim h}(z<0.5), w_n=\Pr_{z\sim h}(z>0.5)$. Hence one of the agents $\{1,n\}$ \textbf{is effectively a dictator}, depending on whether the median of $h$ is below or above $0.5$. From now on we will assume that agent~1 is the dictator, as this occurs with high probability as $k\rightarrow \infty$ under most distributions with $\mu<0.5$.  Thus (for sufficiently large $k$), 
\labeq{dictator} {
    \lVert \mj^P(S_N)-x^* \rVert = k\min_{i \in N}{(P_i)}=kP_1 .}

To recap, under scenario $B$ the majority mechanism is equivalent to unweighted majority of a size $n$ committee, while under scenario $P$, the mechanism is equivalent to a dictatorship of the best expert (i.e., the most conformist agent).

Given a particular distribution $h$, we can calculate $\loss^P(n)$ analytically or numerically. E.g. when $h=U[0,a]$ (note $a=2\mu)$,
$$\loss^{P}(n)= k\mathbb{E}_{P^n\sim U(0,a)^n}[\min_{i \in N}{P_i}] = \frac{ka}{n+1}= \frac{2\mu k}{n+1}.$$

\begin{figure}[t]
	\centering
	\includegraphics[scale=0.32 ]{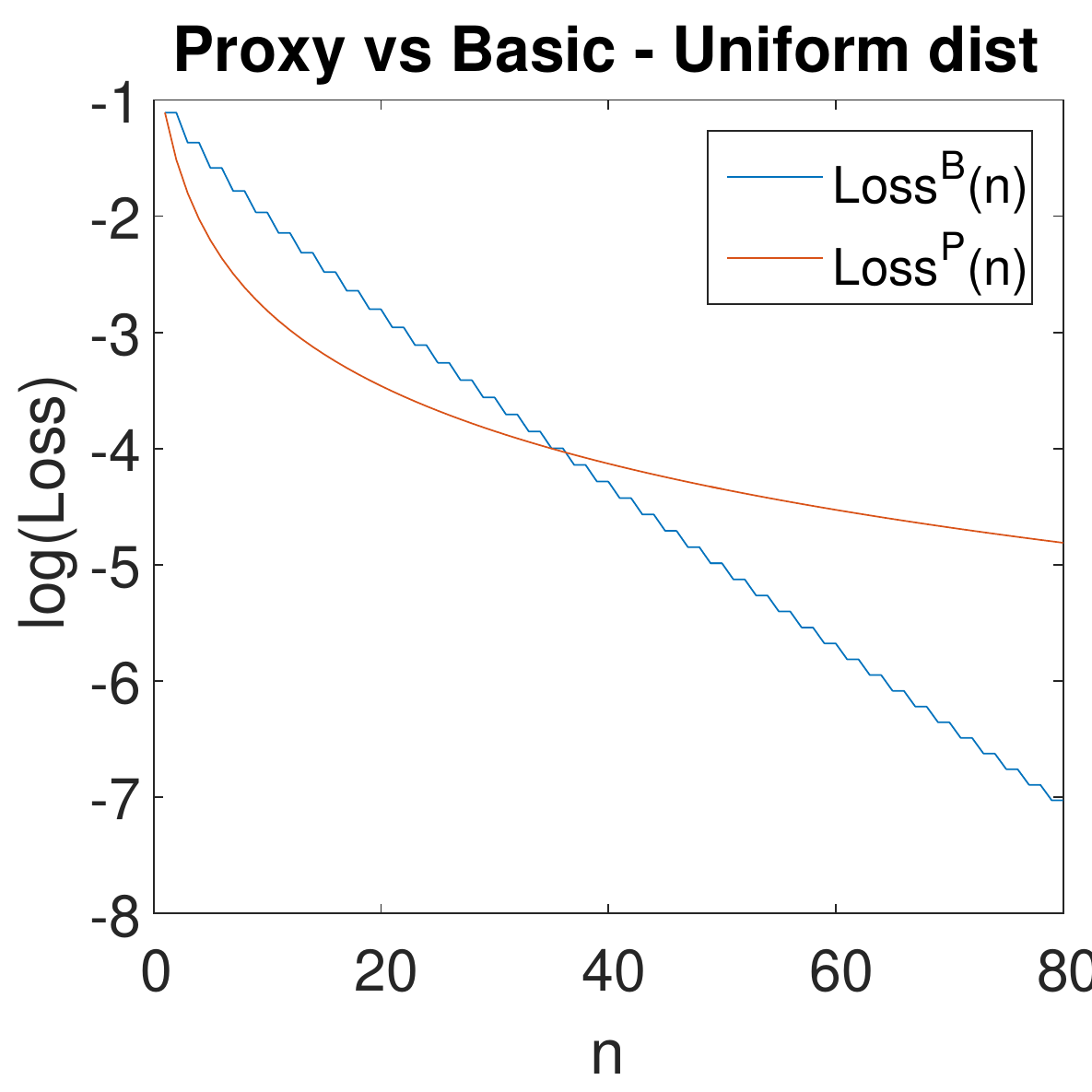}
	\includegraphics[scale=0.32 ]{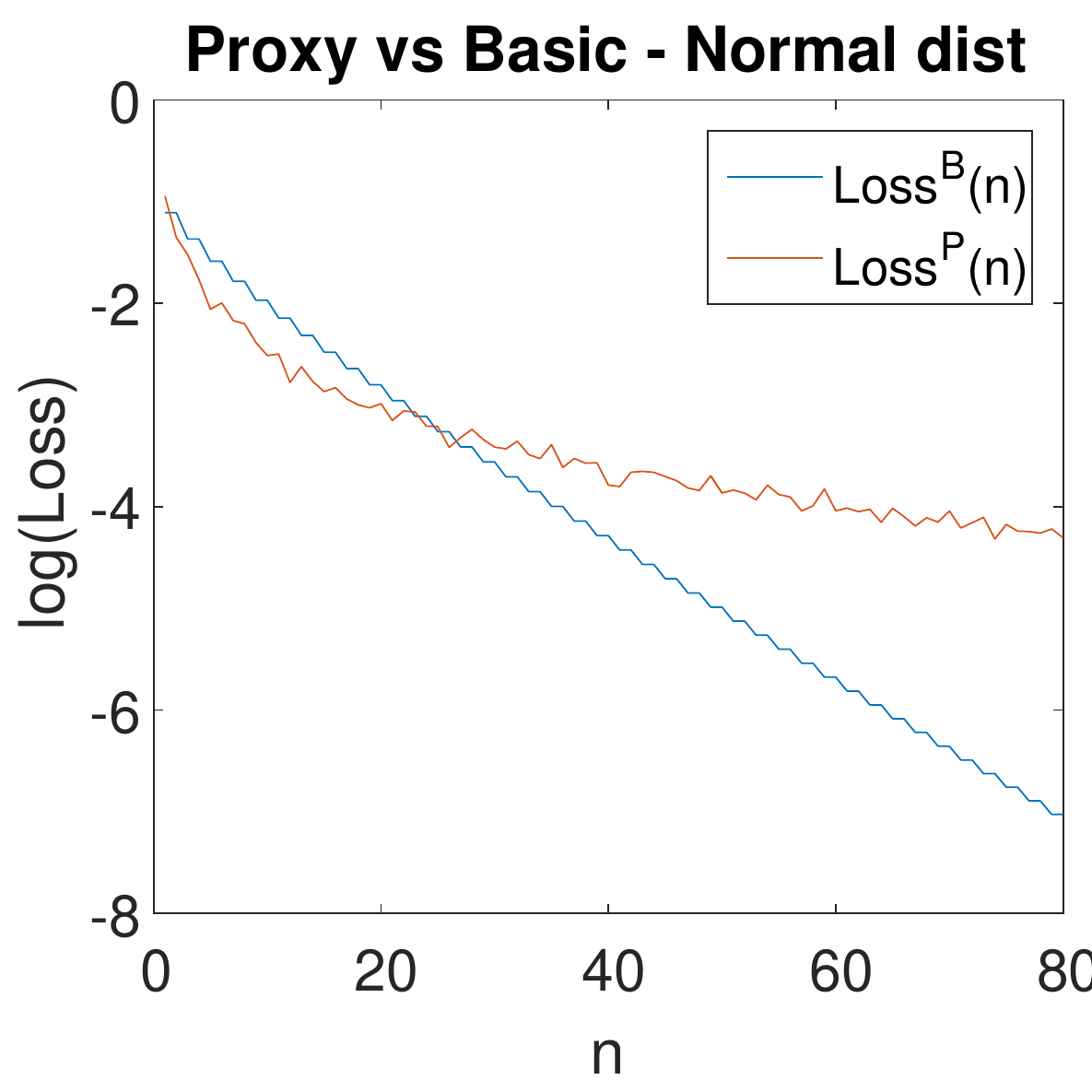}
	\protect\caption{The loss $\loss^Q(n)$ (in log scale), for  distributions $h=U[0,2\mu=0.66]$ (left);  $h=N(\mu=0.33,\sigma=0.3)$ (right).
		\label{fig:binary_theory}\vspace{-4mm}}
\end{figure}

We can infer from Eqs.~\eqref{eq:grofman} and \eqref{eq:dictator} that the proxy voting is beneficial in cases where the best expert out-performs the majority decision on average. Specifically, when the sample is small and/or the signal of agents is weak ($\mu$ is close to $0.5$). See Fig.~\ref{fig:binary_theory}.
%

%

\newsubsec{Strategic participation} 
In general there may be multiple equilibria that are difficult to characterize, and whose outcomes $\mj(S_M)$ may be very different from $\vec x^*$. 
However we can show that for a sufficiently high $k$, there is (w.h.p) only one equilibrium outcome in each of the mechanisms $\mj^{B+L},\mj^{P+L}$. 

Intuitively, the reason is as follows. For every agent $i\in N$ there is w.h.p an issue for which she is pivotal, and thus the only equilibrium in scenario $B{+}L$ will be $M=N$ (w.h.p). 
In scenario $P{+}L$, the entire weight is distributed between the \emph{active agents} with the lowest and highest $P_i$. This means that the best agent is always pivotal and thus active. Regardless of which other agents become active, we get that $\mj^{P+L}(S_N) = \mj(S_M,\vec w_M) = s_1 = \mj^{P}(S_N)$. 
The probability that any other equilibrium exists and affects the loss goes to zero.

\paragraph{Basic setting}
For any $M\subseteq N$, denote by $Y_M$ the event that set $M$ is an equilibrium in the game $\mj^{B+L}(S_N)$. We bound the probability that $N$ is not the unique equilibrium. 
\begin{lemma}
	$\Pr(\neg Y_N \vee (\exists M\subsetneq N, Y_M)) < e^{2n-\frac{k}{2^{n}}}$. Note that for $k \gg n \cdot 2^{n+1}$ the bound tends to $0$.  
\end{lemma}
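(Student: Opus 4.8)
\medskip\noindent\textbf{Proof plan.}\hspace*{1em}
The plan is to reduce the event in the statement to a purely combinatorial property of the $n\times k$ vote matrix $\bigl(s_i^{(j)}\bigr)_{i\le n,\,j\le k}$, and then bound the probability that this property fails by a union bound over issues. First I would make explicit what $Y_M$ means. Because any flip of an issue's majority caused by adding one voter to the active set is necessarily toward that voter's own position (so adding $i$ weakly helps $i$, strictly iff it flips some issue), under lazy bias a set $M$ is a pure equilibrium of $\mj^{B{+}L}(S_N)$ iff (a)~every $i\in M$ is pivotal on at least one issue given active set $M$, and (b)~for every $i\notin M$, joining flips no issue, i.e.\ $i$ is pivotal on no issue given active set $M\cup\{i\}$. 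For each issue $j$ write $\sigma_j=(s_1^{(j)},\dots,s_n^{(j)})\in\{0,1\}^n$ for the joint vote profile, and call $b\in\{0,1\}^n$ \emph{realized} if $\sigma_j=b$ for some $j\le k$.

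The crux is the claim: \emph{if every $b\in\{0,1\}^n$ is realized, then $N$ is the unique equilibrium of $\mj^{B{+}L}(S_N)$.} For $Y_N$: fix $i$; any profile in which exactly $\lfloor n/2\rfloor$ of the remaining $n-1$ agents vote $1$ makes $i$ pivotal under $M=N$, since flipping $i$'s vote moves the tally across the majority threshold (one checks this holds for both parities of $n$), and such a profile is realized, so no $i\in N$ wants to leave. For $\neg Y_M$ with $M\subsetneq N$: choose any $i\notin M$; any profile in which the $1$-votes inside $M$ sit exactly at $M$'s tie threshold and $i$ votes on the side that tips it (taking the lexicographic tie-break into account) makes $i$ pivotal under $M\cup\{i\}$, and such a profile is realized, so $i$ strictly prefers to join and $M$ is not an equilibrium. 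Hence $\neg Y_N\vee(\exists M\subsetneq N:Y_M)$ is contained in the event that some $b\in\{0,1\}^n$ is not realized.

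It remains to bound $\Pr(\exists b:\ b\text{ not realized})$. Conditionally on $(P_1,\dots,P_n)$ the profiles $\sigma_1,\dots,\sigma_k$ are i.i.d.\ with $\Pr(\sigma_j=b)=\prod_i P_i^{b_i}(1-P_i)^{1-b_i}\ge 2^{-n}$, so $\Pr(b\text{ not realized}\mid(P_i)_i)\le(1-2^{-n})^k\le e^{-k/2^n}$, uniformly in the $P_i$. Summing over the $2^n$ profiles and then integrating over $(P_i)_i$ gives $\Pr(\exists b:\ b\text{ not realized})\le 2^n e^{-k/2^n}<e^{2n-k/2^n}$, since $2^n<e^{2n}$, which is the claimed bound.

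The main obstacle is the combinatorial claim, and inside it part~(b): one has to kill \emph{every} proper subset $M$ simultaneously, which forces a careful case analysis — over the parity of $|M|$, over which $i\notin M$ is used, and over the lexicographic tie-break — verifying that the particular ``tipping'' profile chosen for each pair $(M,i)$ really does make $i$ strictly better off by becoming active. Part~(a) and the probabilistic estimate are comparatively routine.
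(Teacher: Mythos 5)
Your proposal follows essentially the same route as the paper: reduce the bad event to the failure of some profile in $\{0,1\}^n$ being realized among the $k$ issues, verify combinatorially that full realization forces $N$ to be the unique equilibrium (pivotality of each $i\in N$ under $M=N$, and of some $i\notin M$ under each $M\subsetneq N$, via evenly-split profiles), and union-bound over the $2^n$ profiles to get $2^n(1-2^{-n})^k < e^{2n-k/2^{n}}$. The one soft spot in your write-up --- the claim that $\Pr(\sigma_j=b)=\prod_i P_i^{b_i}(1-P_i)^{1-b_i}\ge 2^{-n}$, which actually fails when some $P_i$ is far from $1/2$ --- is inherited from the paper itself, which asserts this probability is ``exactly $2^{-n}$,'' so it is not a divergence from the paper's argument.
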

\begin{proof}
For a binary vector $\vec q\in \{0,1\}^n$, we denote by $Z_{\vec q}$ the event that for some issue $j\leq k$, $q_i=s_i^{(j)}$ for all $i\in N$. We also denote $Z^* = \bigcup_{\vec q\in \{0,1\}^n}Z_{\vec q}$.
   
We first argue that $Z^*$ entails both $Y_N$ and $\neg Y_M$ for any $M\subsetneq N$. 
Consider first the set $N$, and voter $i\in N$. If $n$ is odd consider some vector $\vec q$ where $q_i=1$ and all other voters split evenly between $0$ and $1$. Since $Z^*$ holds, there is an issue $j$ s.t.  $q_{i'}=s_{i'}^{(j)}$ for all $i'\in N$. We get that $\mj(N)^{(j)}=1$ but $\mj(N \setminus \{i\})^{(j)}=0$, i.e. $i$ is pivotal and will thus not quit. If $n$ is even we proceed in a similar way except $q_i=0$ and all of $N$ split evenly between $0$ and $1$.

For any smaller set $M$, consider some $i\in N\setminus M$, where $|M|=m$. If $m$ is \emph{even} we consider a vector $\vec q$ where $q_i=1$ and and all voters \emph{in $M$} split evenly between $0$ and $1$. We get that there is an issue $j$ where  $\mj(M)^{(j)}=0$ but $\mj(N \cup \{i\})^{(j)}=1$, i.e. $i$ is pivotal and will join ($M$ is not stable). 
If $m$ is odd we proceed in a similar way except $q_i=0$ and all of $M \cup \{i\}$ split evenly between $0$ and $1$.

\medskip
It is left to bound $Pr(\neg Z^*)$.
Indeed, for any $\vec q$ and $j\leq k$, the probability that $\vec q= s^{(j)}$ is exactly $2^{-n}$, and thus
\begin{align*}
Pr(\neg Z^*)& \leq \sum_{\vec q}Pr(\neg Z_{\vec q}) = \sum_{\vec q \in \{0,1\}^n}\prod_{j\leq k}Pr(s^{(j)} \neq Z_{\vec q})\\
&=  \sum_{\vec q \in \{0,1\}^n}\prod_{j\leq k}(1-2^{-n}) = \sum_{\vec q \in \{0,1\}^n}(1-2^{-n})^k = 2^n  (1-2^{-n})^k \\
&\leq 2^n  e^{-k/ 2^{n}} < e^{2n-\frac{k}{2^{n}}}.
\end{align*}
\end{proof}

Any other equilibrium occurs with negligible probability, and has a bounded effect on the loss.
\begin{corollary}
As $k\rightarrow \infty$, the probability that $N$ is the unique equilibrium of $\mj^{B+L}(S_N)$ tends to $1$. 
In particular, $|\loss^{B+L}(n)- \loss^B(n)| \stackrel{k \rightarrow \infty}{\rightarrow} 0$.
\end{corollary}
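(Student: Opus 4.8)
The plan is to read both claims off directly from the lemma immediately preceding the corollary; no new machinery is needed. Fix the sample size $n$. Let $E$ denote the event $\neg Y_N \vee (\exists M\subsetneq N,\, Y_M)$ whose probability the lemma bounds by $e^{2n-k/2^{n}}$. On $\neg E$ we have both $Y_N$ (so $N$ is an equilibrium) and $\neg Y_M$ for every proper subset $M\subsetneq N$ (so no proper subset is an equilibrium); since every equilibrium is by definition a subset of $N$, this means $N$ is the \emph{unique} equilibrium. Hence $\Pr(N\text{ is the unique equilibrium})\ge 1-e^{2n-k/2^{n}}$, and because $n$ (hence $2^{n}$) is fixed, the right-hand side tends to $1$ as $k\to\infty$. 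This is the first claim.

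For the loss statement I would condition on the same event $\neg E$. Whenever $N$ is the unique equilibrium the active set is $M=N$, so $\mj^{B+L}(S_N)=\mj(S_N)=\mj^{B}(S_N)$ and the two errors coincide exactly; on $E$ the two outcomes still lie in $\{0,1\}^{k}$, so each error $\|\cdot-x^{*}\|$ is at most $k$ and the two errors differ by at most $k$. Writing each loss as an expectation of the (Hamming) error, exactly as in $\loss^{B}(n)=k\Pr(Z_{\mu,n}>n/2)$,
\begin{align*}
\bigl|\loss^{B+L}(n)-\loss^{B}(n)\bigr|
&\le \mathbb{E}\Bigl[\bigl|\,\|\mj^{B+L}(S_N)-x^{*}\|-\|\mj^{B}(S_N)-x^{*}\|\,\bigr|\Bigr]\\
&= \mathbb{E}\Bigl[\bigl|\,\|\mj^{B+L}(S_N)-x^{*}\|-\|\mj^{B}(S_N)-x^{*}\|\,\bigr|\cdot\ind{E}\Bigr]\\
&\le k\cdot\Pr(E)\;\le\; k\,e^{2n-k/2^{n}}\;=\;e^{2n}\,k\,e^{-k/2^{n}}\;\xrightarrow[k\to\infty]{}\;0 .
\end{align*}

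The only point that requires care — and the closest thing to an obstacle — is that $\loss^{B}(n)$ itself grows linearly in $k$, so one cannot simply argue ``the bad event is rare, hence the losses are close''; one must check that the exponentially small probability of $E$ still dominates the linear-in-$k$ worst-case gap between the two errors. It does, since for fixed $n$ the decay exponent $1/2^{n}$ is a strictly positive constant and exponential decay beats the factor $k$. If instead one uses the squared-error definition of the loss from the preliminaries, the worst-case gap becomes $k^{2}$ rather than $k$, and the identical estimate goes through verbatim because $k^{2}e^{-k/2^{n}}\to 0$ as well.
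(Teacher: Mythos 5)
Your proposal is correct and is essentially the argument the paper intends: the first claim is an immediate restatement of the lemma's bound, and the loss claim follows by conditioning on the rare bad event and noting that the worst-case error gap (at most $k$, or $k^2$ for the squared loss) is killed by the $e^{-k/2^{n}}$ decay for fixed $n$. You also correctly flag the one point needing care, namely that one cannot ignore the linear-in-$k$ growth of the loss, and your estimate handles it.
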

\rmr{The bound on $k$ is very lax, but this is the simplest proof}

\paragraph{Proxy voting}
From Lemma~\ref{lemma:dictator}, we know that for every set $M\subseteq N$, the most extreme voter $j=1$ gets the votes of all inactive voters with $P_i<0.5$, and in particular is pivotal (w.h.p., as $k$ is large enough). Thus voter~1 is active in any equilibrium, and is in fact a dictator as in the non-strategic scenario.

Finally, since we assume that the median of $h$ is less than $0.5$, $j=1$ is a dictator.  As no other voter in $N$ is pivotal on any issue,  they all become inactive. Thus under the same assumptions of Lemma~\ref{lemma:dictator}:
\begin{corollary}
As $k\rightarrow 0$, the probability that $M=\{1\}$ is the unique equilibrium of $\mj^{S}(S_N)$ tends to $1$. 
In particular, $|\loss^{P+L}(n)- \loss^P(n)| \stackrel{k \rightarrow \infty}{\rightarrow} 0$.
\end{corollary}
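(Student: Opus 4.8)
The plan is to reduce everything to Lemma~\ref{lemma:dictator} and lazy bias, exactly as in the non-strategic analysis, the only new ingredient being that the dictatorship must be established not just for the full set $N$ but for every active set $M\subseteq N$ that could arise in the game. Fix $\delta>0$ small enough that $\Pr_{z\sim h}(z<0.5-\delta)>1/2$; this is possible since the median of $h$ is strictly below $0.5$. Running the Chernoff/union-bound argument from the proof of Lemma~\ref{lemma:dictator} uniformly over voter parameters bounded away from $0.5$ (the exponent there stays bounded away from $0$ once $1-2z\ge 2\delta$), and then taking a further union bound over the $2^n$ subsets $M$ and over the $O(n^2)$ pairs of agents, I obtain a single event $E$ with $\Pr(E)\ge 1-p_k$, $p_k\le (2^n+n^2)e^{-b'k}\to 0$, on which: (i) for every nonempty $M$, the agent $i_M^*\equiv\argmin_{i\in M}P_i$ captures all voters with parameter $<0.5-\delta$, hence $w_{i_M^*}>1/2$ and $\mj(S_M,\vec w_M)=s_{i_M^*}$; and (ii) the realized positions of any two agents with distinct parameters differ (in fact in $\Omega(k)$ coordinates).

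Condition on $E$ and verify the three facts that pin the equilibrium. First, $M=\{1\}$ is an equilibrium: agent~$1$ alone obtains its own position and will not leave, and if any $i\ne 1$ joins then $i_{\{1,i\}}^*=1$ so the outcome stays $s_1$, whence by lazy bias $i$ stays inactive. Second, no set with $1\in M$ and $|M|\ge 2$ is an equilibrium: the outcome is $s_1$, and any $i\in M\setminus\{1\}$ that leaves still leaves $1$ as the minimiser and $s_1$ as the outcome, so $i$ is non-pivotal and (lazy bias) strictly prefers to quit. Third, no set $M$ with $1\notin M$ is an equilibrium: when agent~$1$ joins, the minimiser becomes $1$ and the outcome switches from $s_{i_M^*}$ to $s_1$, and by (ii) $s_1\ne s_{i_M^*}$, so agent~$1$ strictly prefers to join. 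Hence on $E$, $M=\{1\}$ is the unique equilibrium, and $\Pr(E)\to 1$, which is the first assertion. (Weak acyclicity, if wanted, comes for free: from any state agent~$1$ becomes active first, and afterwards every other active agent is non-pivotal and drops out one by one.)

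For the loss statement, note that $E$ is contained in the event used to derive Eq.~\eqref{eq:dictator} (apply (i) with $M=N$, giving $w_1>1/2$), so on $E$ both $\mj^{P}(S_N)=s_1$ and $\mj^{P+L}(S_N)=s_1$, and the two squared errors coincide there. Off $E$ the Hamming error of either mechanism is at most $k$, so the squared errors differ by at most $k^2$. Therefore $\big|\loss^{P+L}(n)-\loss^{P}(n)\big|\le k^2\,p_k\le k^2(2^n+n^2)e^{-b'k}\to 0$ as $k\to\infty$. The genuinely delicate step is the uniformity in (i): the Chernoff bound of Lemma~\ref{lemma:dictator} degrades as a voter's parameter approaches $0.5$ and as the parameter gap between two agents shrinks, which is why the margin $\delta$ is introduced and why the implicit threshold on $k$ depends, through the realized parameter gaps, on the sample — harmless for a $k\to\infty$ statement, but the point that needs care.
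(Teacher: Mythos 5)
Your proof is correct and follows essentially the same route as the paper, which establishes the corollary via the informal argument preceding it: Lemma~\ref{lemma:dictator} makes the best active agent a dictator for every candidate active set $M$, so agent~$1$ is always pivotal (hence active) and everyone else is never pivotal (hence inactive by lazy bias). Your write-up is in fact more careful than the paper's — the margin $\delta$, the union bound over all $2^n$ subsets, and the explicit off-event bound for the loss are details the paper glosses over — but they are refinements of the same argument, not a different one.
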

\rmr{here we need a much smaller $k$ than in the B+L scenario}

%

\newsubsec{Empirical Evaluation}
We evaluate proxy voting on real data to avoid  two unrealistic assumptions in our theoretical model: that the number of issues $k$ is very large, and that $i$'s votes on all issues are i.i.d.

We examine several data sets from PrefLib~\cite{mattei2013preflib}:\rmr{do not remove citation} The first few datasets are Approval ballots of French presidential 2002 elections over 16 candidates in several regions (ED-26). We treat each candidate is an ``issue'' and each voter can either agree with the issue (approve this candidate) or disagree. $P_i$ is the fraction of issues on which voter~$i$ disagrees with the majority.

%
%

We also considered two datasets of ordinal preferences: sushi preferences (ED-14) and AGH course selection (AD-9). 
The translation to a binary matrix is by checking for each pair of alternatives $(\alpha,\beta)$ whether $\alpha$ is preferred over $\beta$. This leaves us with 45 and 36 binary issues in the sushi and AGH datasets, respectively.\footnote{Note that Hamming distance between agents' positions equals the Kendal-Tau distance between their ordinal preferences.}
  A subset of $k=15$ issues were sampled at each iteration in order to get results that are more robust (we thus get a ``sushi distribution'' and ``AGH distribution'' instead of a single dataset). 

\medskip
We first consider the weight distribution among agents (Fig.~\ref{fig:binary_sushi}). The weight of agents is decreasing in $P_i$, meaning that agents with higher agreement with the majority opinion gets more followers, with the best agents getting a significantly higher weight. This is related to the theoretical result that the best expert get $>0.5$ weight, but is much less extreme. Also there is no weight concentration on the worst agent (this can be explained by the `Anna Karenina principle',\footnote{``Happy families are all alike; every unhappy family is unhappy in its own way''~\cite{tolstoy}.} as each bad agent errs on different issues). In other words, allowing proxies does not result in a dictatorship of the best active agent, but in meritocracy of the better active agents.

This leads us to expect better performance than the theoretical prediction when comparing Proxy voting to the Basic setting. Indeed, Fig.~\ref{fig:binary_sushi} (right) and Fig.~\ref{fig:binary_all} show that in all datasets   $\loss^P(n) < \loss^B(n)$ except for very small samples in the French election datasets. This gap increases quickly with the sample size. 

%
%
%

\begin{figure}[t]
	
	\centering
	\includegraphics[scale=0.32 ]{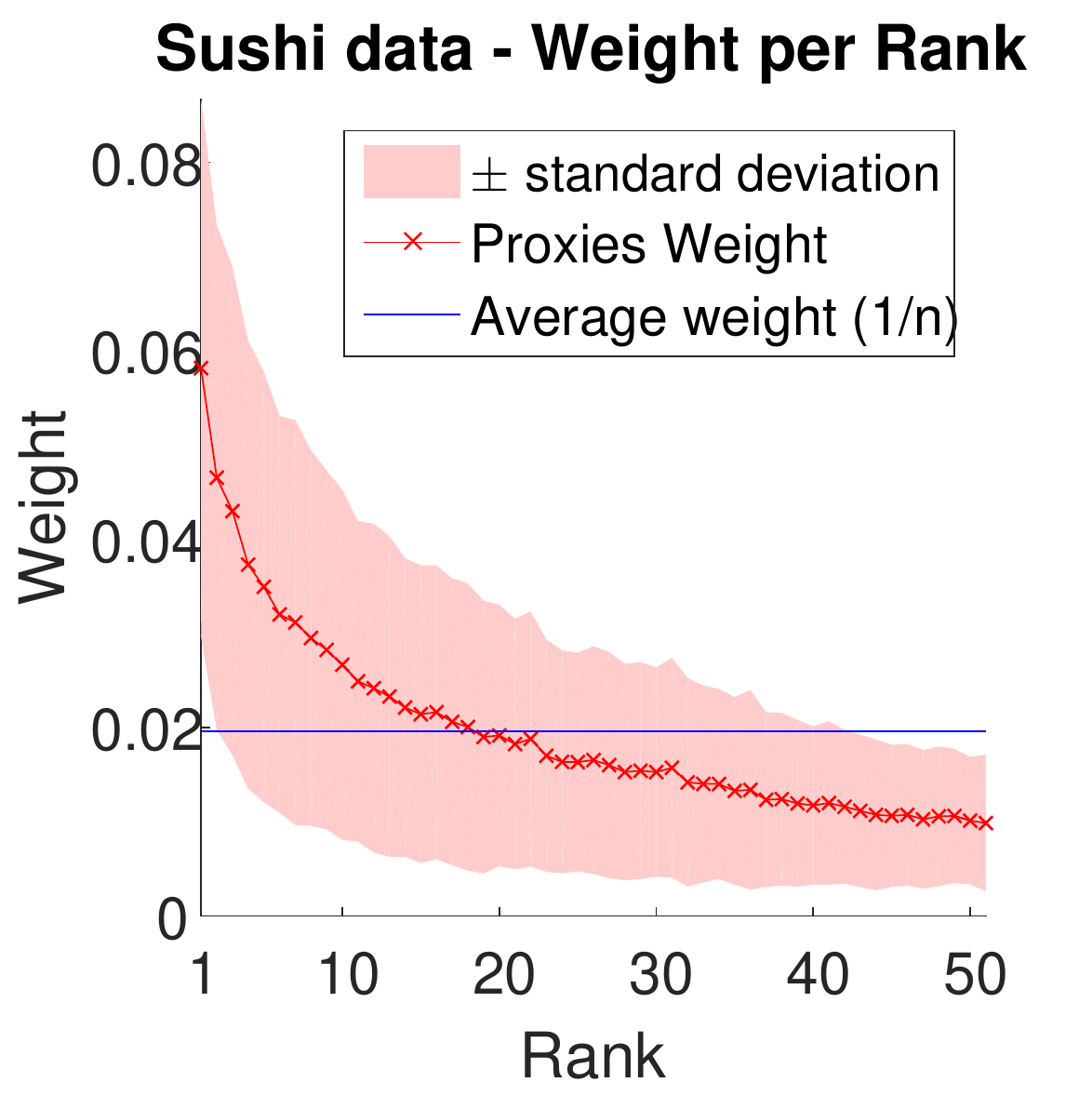}
	~~~
	\includegraphics[scale=0.35 ]{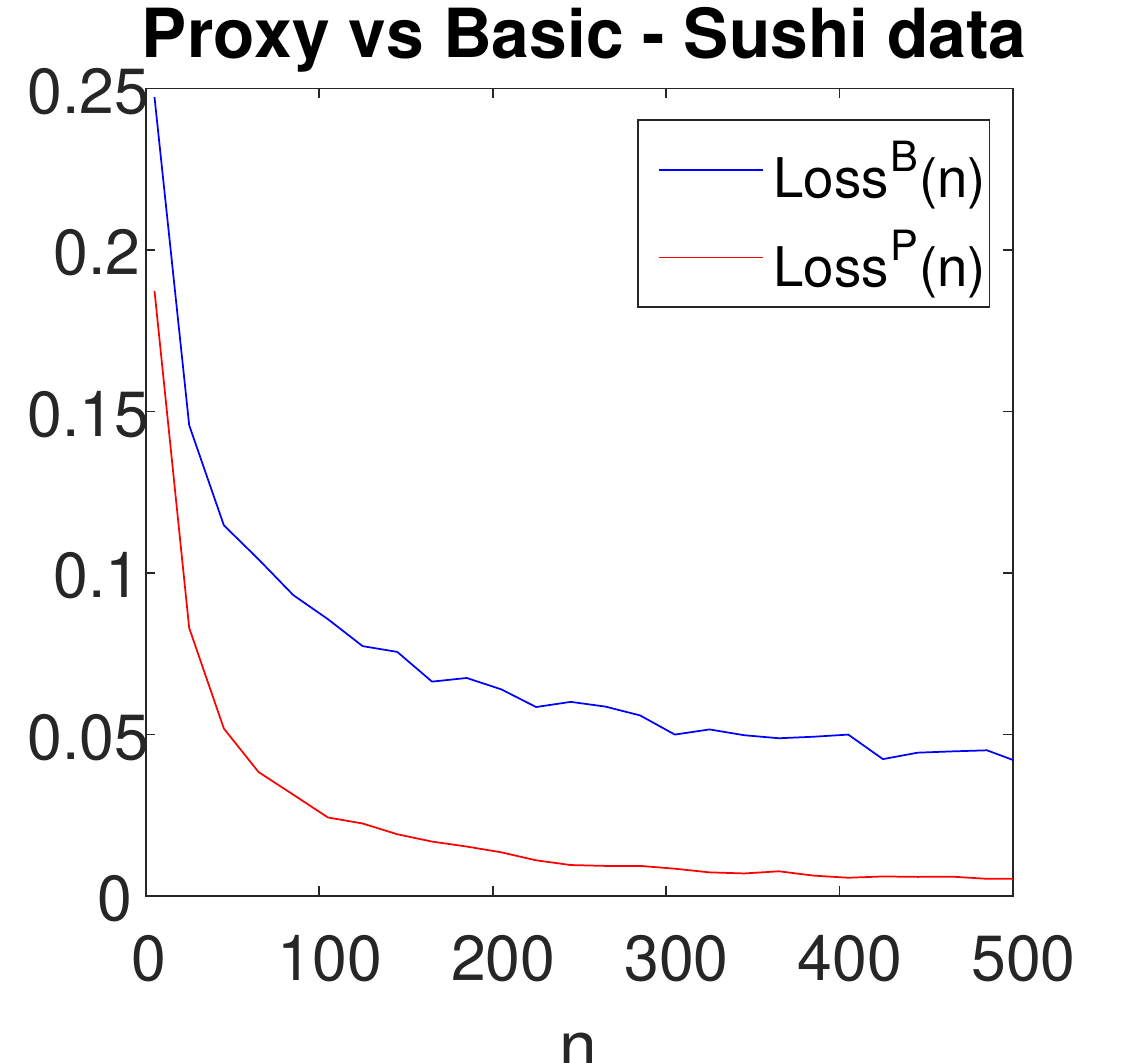}
	\protect\caption{On the left, the average weight of each agent, in increasing order of $P_i$ (best agent on the left).  On the right, the loss with (red) and without (blue) proxies. Results for the other datasets were similar.
		\label{fig:binary_sushi}
		}
\end{figure} 

\begin{figure}[t]
	
	\centering
	\includegraphics[scale=0.45 ]{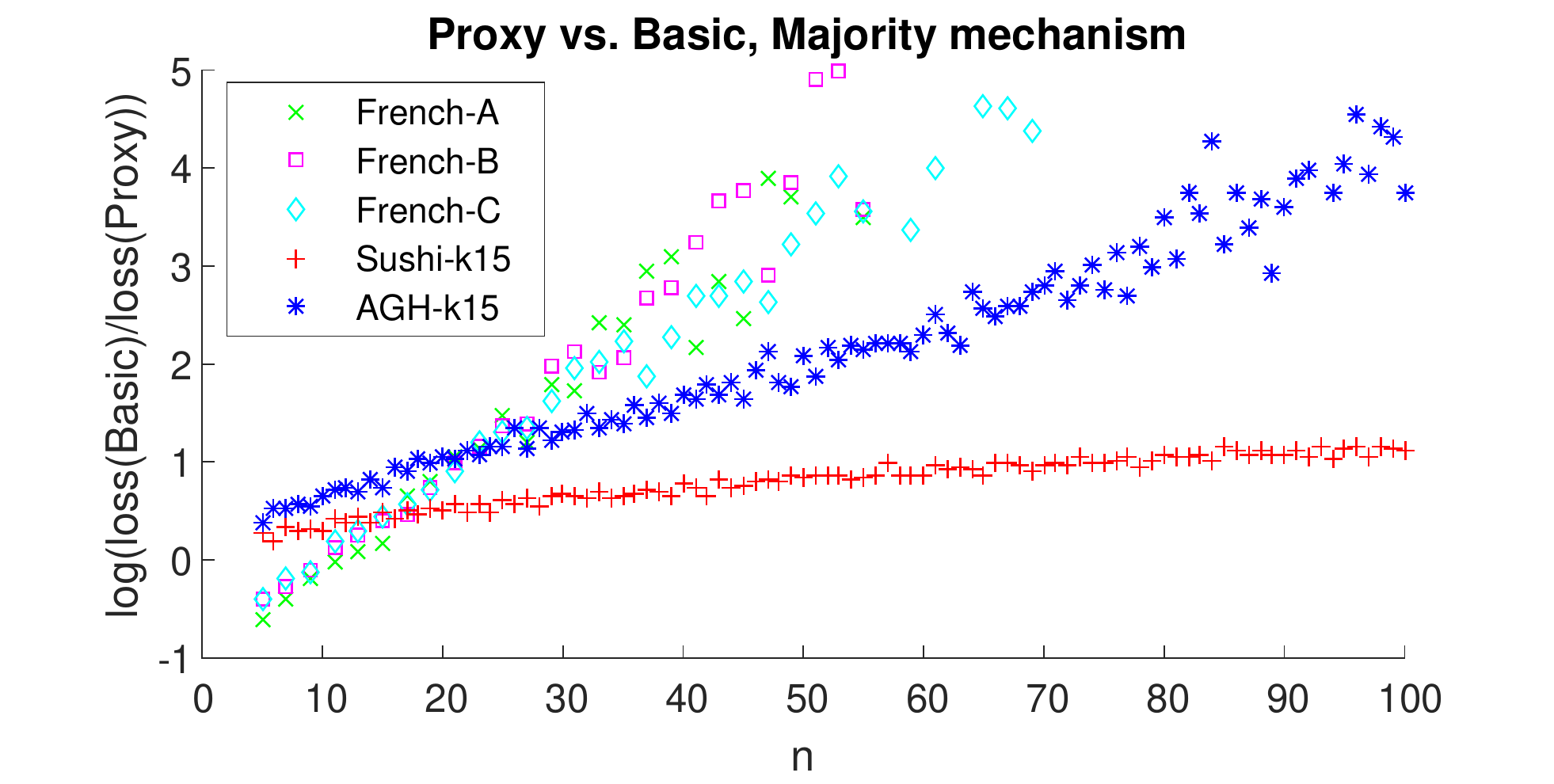}
	\protect\caption{The ratio of $\loss^B(n)$ and $\loss^P(n)$ (in log scale) for all datasets. Each point based on 1000 samples. 
		\label{fig:binary_all}\vspace{-4mm}}
\end{figure} 


\section{Discussion and Related Work}
\label{sec:discussion}
\def\wvs{\vphantom{$2^{2^2}$}}
\begin{table*}[t]
	\begin{center}
		\begin{tabular}{c||c|c|c}
		\hline
			Rule: &   Median  & Mean   &  Majority  ($k\rightarrow \infty$) \\
   				  &			  &		   & with many issues                    \\
			\hline
			\hline
			Proxy better & Yes  & $f$ SP + symmetric   &  No \\
			for any $S_N$& 		& 		$n=2$		   &		\\
			\hline
			$\loss^P<\loss^B$  \wvs    & Always    &  $f$ Uniform & depends on the best  \\
			\cline{1-3}
			$\loss^P\ll \loss^B$  \wvs  &  $f$  symmetric (* most $f$)  &  $f$ Uniform (* most $f$ ) & agent (* real data)\\
			\hline
			\hline
			unique equilibrium &  always   & $f$ Uniform & always \\
			of $\g^{P+L}$\wvs  & 		   & 			 &		   \\
			\hline
			$\loss^{B+L} \geq \loss^B$ \wvs  &  always & always  & always  \\
			\hline
			$\loss^{P+L} \leq \loss^P$ \wvs & always   &  $f$ Uniform (* some SP $f$) & always \\ 
			\hline
		\end{tabular}
	\end{center}
	\protect\caption{A summary of our results. The first three lines show the effect of proxy voting when all agents are active. Results marked by (*) are obtained by simulations. The bottom lines summarize the effect of strategic voting with lazy bias. \label{tab:results} }
\end{table*}

Our results, summarized in Table~\ref{tab:results}, provide a strong support for proxy voting when agents' positions are placed on a line, especially when the Median mechanism is in use. In contrast, when positions are (binary) multi-dimensional, proxy voting might concentrate too much power in the hands of a single proxy, and increase the error. However we also showed that on actual data this rarely happens and analyzed the reasons.  
These findings corroborate our hypothesis that proxy voting can improve representation across several domains. We are looking forward to study the effect of proxy voting in other domains, including common voting functions that use voters' rankings. 

\rmr{connection to learning and boosting?}


Proxy voting, and our model in particular, are tightly related to the \emph{proportional representation problem}, dealing with how to select representatives from a large population. A recent paper by Skowron~\cite{skowron2015we} considers the selection of representatives who then use voting to decide on issues that affect the society. In our case, selection is random as suggested in \cite{mueller1972representative
}, and representatives are weighted proportionally to the number of voters that pick them as proxies, as originally suggested by Tullock~\shortcite{tullock1967proportional}. It is interesting to note that political systems where public representatives are selected at random (``sortition'') have been applied in practice~\cite{dowlen2015political}. Our results suggest that such systems could be improved by weighting the representatives after their selection. Setting the weight \emph{proportionally} to the number of followers seems natural, but it is an open question whether there are even better ways to set these weights.
%

Closest to out work is a model by Green-Armytage~\shortcite{green2015direct}, where voters select proxies and use the Median rule to decide on each of several continuous issues. Decisions are evaluated based on their square distance from the ``optimal'' one. However even if the entire population votes, the outcome may be suboptimal, as Green-Armytage assumes people perceive their own position (as well as others' position) with some error. He then focuses on how various options for delegating one's vote may contribute to reducing her \emph{expressive loss}, i.e. the distance from her true opinion to her ballot.
In contrast, expressive losses do not play a roll in our model, where the sources of inaccuracy are small samples and/or strategic behavior. 

Alger~\shortcite{alger2006voting} considers a model with a fixed set of political representatives on an interval (as in our model), but focuses mainly on the ideological considerations of the voters and the political implications rather than on mathematical analysis.  Our very positive results on the use of proxies in the Median mechanism support Alger's conclusions, albeit under a somewhat different model of voters incentives. Alger also points out that proxy voting significantly reduces the amount of communication involved in collecting ballots on many issues. 

Other models allow chains of voters who use each other as proxies~\cite{green2005direct,degrave2014resolving}, or social influence that  effectively increases the weight of some voters~\cite{alon2015robust}.

Indeed, we believe that a realistic model of proxy voting would have to take into account such topological and social factors in addition to statistics and incentives. E.g., \cite{alon2015robust} shows the benefits of a bounded degree, which in our model may allow a way to bound excessive weights.
Social networks may also be a good way to capture correlations in voters' preferences~\cite{procaccia2015ranked}, and can thus be used to extend our results beyond independent voters.
%

\newpar{Strategic behavior} 
We showed that most of our results hold when participation is strategic. What if voters  (either active or inactive) could mis-report their position? Note that inactive voters have no reason to lie under the Median and the Majority mechanisms, due to standard strategyproofness properties.  However active agents may be able to affect the outcome by changing the partition of followers.  
\rmr{Another variation is to consider variable costs for proxy voting, which may depend on the location of the voter, her distance from her proxy etc.}
We can also consider more nuanced strategic behavior, for example where an agent also cares about her number of followers regardless of the outcome. More generally, strategic considerations under proxy voting combine challenges from strategic voting with those of strategic candidacy~\cite{hotelling29,dutta2001strategic}, and would require a careful review of the assumptions of each model. 


Other open questions include the effect of proxy voting on \emph{diversity}, \emph{fairness}, and \emph{participation}. It is argued that diverse representatives often reach better outcomes~\cite{marcolino2013multi}, and fairness attracts much attention in the analysis of voting and other multiagent systems~\cite{bogomolnaia2005collective,walsh2007representing,dickerson2014computational}. The effect on participation and engagement may also be quite involved, since allowing voters to use a proxy may increase the participation level of some who would otherwise not be represented, but on the other hand may lower the incentive to vote actively, thereby reducing overall engagement of the society.

Finally, the future of proxy voting depends on the development and penetration of novel online voting tools and social apps, such as those mentioned in the Introduction. We hope that sharing of data and insights will promote research on the topic, and set new challenges for mechanism design.
 
\rmr{In our paper at least lazy bias indeed leads to reduced participation under proxy voting, but without increasing the error. In more complex models this may no longer hold.}

%


%
%


%
\end{document}